\documentclass[12pt,a4paper]{article}

\usepackage[utf8]{inputenc}
\usepackage[T1]{fontenc}
\usepackage{amsmath,amssymb,amsfonts,amsthm}
\usepackage{geometry}
\usepackage{graphicx}
\usepackage{hyperref}
\hypersetup{
  pdftitle={HFIPay: Privacy-Preserving Identifier-Routed Payments with Verifiable Claim Binding},
  pdfauthor={Jian Sheng Wang},
  pdfsubject={Privacy-preserving identifier-routed cryptocurrency payments},
  pdfkeywords={identifier-based payment, privacy-preserving payment, verifiable quote, blinded claim binding, zero-knowledge authorization, blockchain}
}
\usepackage{url}
\usepackage{booktabs,multirow,tabularx}
\usepackage{algorithm}
\usepackage{algpseudocode}
\usepackage{tikz}
\usetikzlibrary{shapes.geometric, arrows.meta, positioning, fit, calc, backgrounds}
\usepackage{xcolor}
\usepackage{caption}
\usepackage{float}

\newcommand{\keccak}{\mathsf{keccak256}}

\newcommand{\HFIPay}{\textnormal{\textsc{HFIPay}}}
\newcommand{\ACEGF}{\textnormal{\textsc{Ace-Gf}}}
\newcommand{\ZKACE}{\textnormal{\textsc{ZK-ACE}}}
\newcommand{\DIDP}{\textnormal{\textsc{Didp}}}
\newcommand{\IDcom}{\mathsf{ID_{com}}}

\newtheorem{definition}{Definition}

\newtheorem{lemma}{Lemma}[section]
\newtheorem{proposition}{Proposition}[section]

\title{HFIPay: Privacy-Preserving Identifier-Routed Payments\\with Verifiable Claim Binding}

\author{
  \texttt{Jian Sheng Wang} \\
  Yeah LLC\\
  \texttt{jason@yeah.app}
}
\date{Jun 21, 2026}

\begin{document}
\maketitle

\begin{abstract}
Human-friendly identifiers such as email addresses and phone numbers are convenient payment targets, but direct mappings from identifiers to blockchain addresses make balances and transaction histories enumerable by anyone who knows the identifier.

We present \HFIPay{}, a relay-assisted protocol for privacy-preserving identifier-routed cryptocurrency payments.
The relay resolves the identifier off-chain and registers only a random intent identifier, a per-intent blinded binding $\rho_i$, and the quoted payment tuple on-chain; no identifier or reusable recipient tag is published before claim.
In a \emph{verified-quote} deployment, the sender verifies an attested quote proving that $\rho_i$ was derived from the same hidden binding handle as the recipient's attested binding-key commitment, preventing relay-side recipient substitution before funding.

Claims are authorized by a zero-knowledge proof, instantiated through \ZKACE{}~\cite{zkace}, that the claimant controls the deterministic identity whose epoch-scoped handle opens the blinded binding and authorizes release of the quoted asset and amount to a chosen destination.
We define observer-model games for enumeration resistance and pre-claim unlinkability, state the composition needed for post-quote claim correctness, and characterize relay compromise and post-claim linkability.

\medskip
\noindent\textbf{Keywords:} identifier-based payment, privacy-preserving, verifiable quote, blinded claim binding, zero-knowledge authorization
\end{abstract}

\section{Introduction}\label{sec:intro}

\subsection{Motivation}

Cryptocurrency adoption is hindered by a fundamental UX barrier: users must manage opaque hexadecimal addresses, select the correct network, and possess native tokens for transaction fees.
These requirements make peer-to-peer crypto payments inaccessible to non-technical users and error-prone even for experienced ones.

By contrast, human-friendly identifiers such as email addresses, phone numbers, and social media handles (e.g., X/Twitter usernames) are ubiquitous and already serve as primary anchors for online identity.
If cryptocurrency could be sent to such an identifier---with no wallet setup, no address copying, no gas token acquisition---the resulting experience would be indistinguishable from traditional digital payments.

\subsection{Privacy Challenge}

A na\"ive implementation maps user identifiers directly to blockchain addresses via a deterministic function (e.g., $\keccak(\texttt{identifier}) \to \texttt{address}$), where the identifier might be an email, phone number, or social handle.
This is simple but introduces a critical privacy vulnerability: any third party who knows a user's identifier can derive their on-chain address and inspect their entire transaction history, token balances, and counterparty relationships.

For public figures, this is especially dangerous.
An adversary knowing only \nolinkurl{alice@example.com}, a public phone number, or a well-known X handle could reconstruct Alice's complete financial profile on every supported chain.

\subsection{Our Approach}

\HFIPay{} solves this by decoupling \emph{identifier routing}, \emph{recipient binding}, \emph{sender-side quote verification}, and \emph{claim authorization}.
The sender specifies only a human-friendly identifier such as an email address.
The relay privately resolves that identifier to a registered recipient record, samples a random \texttt{intentId}, derives a one-time deposit address, and commits only an intent-specific blinded binding $\rho_i$ together with the quoted payment metadata on-chain.
In the verified-quote deployment, the relay must additionally provide the sender with an attested quote proving that the quoted $\rho_i$ is tied to the intended recipient's hidden binding handle.
The recipient later claims by proving control of the deterministic identity bound to that blinded value.

The key properties are:
\begin{enumerate}
    \item \textbf{Enumeration resistance.} No on-chain data reveals which funded intents belong to a known email, phone number, or social handle.
    \item \textbf{Pre-claim unlinkability.} Multiple pre-claim payments to the same identifier do not expose a reusable public recipient tag.
    \item \textbf{Post-quote claim correctness.} Once a sender has accepted a valid quote and the resulting intent stores the quoted blinded-binding tuple, claim authorization is enforced on-chain by a zero-knowledge proof rather than by relay discretion.
    \item \textbf{Device-portable recovery.} Because the recipient identity is a deterministic function of a recoverable root, the recipient can rederive the same identity across devices from sealed root material, avoiding a public identifier registry.
    \item \textbf{Relayer-assisted operation.} In an integrated client flow, relays may sponsor fees so users are not required to manage native gas tokens directly.
\end{enumerate}

Informally, \HFIPay{} aims to provide the UX of ``send to email'' while keeping identifier resolution private and moving the spending authority check onto the chain.

\subsection{Contributions}

This paper makes the following contributions:

\begin{enumerate}
    \item \textbf{Protocol design.} We define \HFIPay{} as a relay-assisted routing layer with intent-specific blinded bindings and quoted asset tuples committed before funding and an optional sender-verifiable quote, so that recipient substitution can be detected before funds are sent and post-funding claim authorization is enforced on-chain (Section~\ref{sec:protocol}).
    \item \textbf{Authorization composition.} We factor the identity and authorization dependencies into two named assumptions---a deterministic identity derivation primitive (\DIDP{}) and a knowledge-sound claim proof system---so that \HFIPay{}'s claim-correctness arguments are instantiation-agnostic (Section~\ref{sec:model:didp}). We use \ZKACE{}~\cite{zkace} as one instantiation, but the analysis depends only on the stated interface, not on any single construction.
    \item \textbf{Privacy analysis.} We formalize enumeration resistance and pre-claim unlinkability against on-chain observers, and explicitly characterize what becomes linkable after claim (Section~\ref{sec:privacy}).
    \item \textbf{Binding modes.} We distinguish a baseline deployment from a verified-quote deployment that uses attested off-chain binding proofs rather than a public identifier registry (Sections~\ref{sec:model} and~\ref{sec:trust}).
    \item \textbf{Implementation and benchmarks.} We outline compatible EVM and Solana realizations and report measurements from the reference implementation, including a directly on-chain-verifiable claim proof ($128$-byte Groth16 proof, ${\approx}241\text{k}$ gas to verify on EVM) alongside a transparent post-quantum-oriented STARK backend, and per-intent on-chain gas costs (Section~\ref{sec:implementation}).
\end{enumerate}

\HFIPay{} builds on the \ACEGF{} identity framework~\cite{acegf} and uses \ZKACE{} for chain-verifiable authorization~\cite{zkace}; recipient-side cross-device recovery follows from the deterministic derivation of the recipient root and requires no public identifier registry.
Its routing mechanism is related to stealth-address systems (ERC-5564~\cite{eip5564}) but operates at the application layer with a different trust boundary; a detailed comparison appears in Section~\ref{sec:related}.

\section{System Model}\label{sec:model}

\subsection{Deterministic Identity Derivation Primitive}
\label{sec:model:didp}

The recipient identity in \HFIPay{} is built on a \emph{deterministic identity derivation primitive} (\DIDP{}), treated as a black-box prerequisite supplied by the wallet or account layer rather than proposed here. We use only its interface and assumed security properties, so that the privacy and claim-correctness arguments can state exactly where identity-root assumptions enter. Any framework satisfying this interface may be used; \ACEGF{}~\cite{acegf} is one such instantiation, and the same primitive underlies \ZKACE{}~\cite{zkace}, whose claim relation \HFIPay{} reuses.

A \DIDP{} provides two operations:
\begin{itemize}
    \item \textbf{Identity reconstruction:} $\mathsf{REV}_B \leftarrow \mathsf{Unseal}(\mathsf{params}, \mathsf{SA}, \mathsf{Cred})$, where $\mathsf{SA}$ is a sealed artifact encoding the root and $\mathsf{Cred}$ is an authorization credential (e.g., a user passphrase). Returns $\mathsf{REV}_B$ or $\bot$.
    \item {\sloppy \textbf{Context-specific derivation:} $\mathsf{Key} \leftarrow \mathsf{Derive}(\mathsf{REV}_B, \mathsf{Ctx})$, deterministic in $(\mathsf{REV}_B, \mathsf{Ctx})$.\par}
\end{itemize}

{\sloppy
\HFIPay{} interacts only with the output of this primitive: the recipient holds $\mathsf{REV}_B$ (obtained via $\mathsf{Unseal}$ or any equivalent account mechanism) and from it derives the identity commitment $\IDcom_B$ and, for each epoch $e$, the binding handle $u_{B,e} = H(\mathsf{Derive}(\mathsf{REV}_B, \mathsf{Ctx}_{\mathsf{bind}} \| e))$. The correctness and usability of wallet-side reconstruction---including cross-device portability, which is simply re-running $\mathsf{Unseal}$ on a new device from the sealed artifact---are outside our scope.\par}

We assume the following \DIDP{} properties:
\begin{itemize}
    \item \textbf{Determinism:} a fixed $(\mathsf{REV}_B, \mathsf{Ctx})$ always yields the same derived key, so $\IDcom_B$ and every $u_{B,e}$ are reproducible.
    \item \textbf{Context isolation:} distinct $\mathsf{Ctx} \neq \mathsf{Ctx}'$ yield computationally independent derived keys; in particular, handles for different epochs are independent.
    \item \textbf{Identity-root recovery hardness:} formalized by the game $\mathsf{Game}^{\mathsf{rec\text{-}didp}}_{\mathcal{A}}(\lambda)$:
    \begin{enumerate}
        \item the challenger samples $\mathsf{REV}_B \xleftarrow{\$} \{0,1\}^{\lambda}$;
        \item $\mathcal{A}$ is given oracle access to $\mathcal{O}_{\mathsf{pub}}$, which on $(s, d)$ returns $H(\mathsf{REV}_B \| s \| d)$, and to $\mathcal{O}_{\mathsf{derive}}$, which on $\mathsf{Ctx}$ returns $H(\mathsf{Derive}(\mathsf{REV}_B, \mathsf{Ctx}))$;
        \item $\mathcal{A}$ outputs $\mathsf{REV}_B^\star$ and wins if $\mathsf{REV}_B^\star = \mathsf{REV}_B$.
    \end{enumerate}
    For every polynomial-time $\mathcal{A}$, $\mathsf{Adv}^{\mathsf{rec\text{-}didp}}_{\mathcal{A}}(\lambda) = \Pr[\mathsf{REV}_B^\star = \mathsf{REV}_B]$ is negligible in $\lambda$. Equivalently, no epoch handle $u_{B,e}$ or commitment is invertible to the root from public data.
\end{itemize}

This recovery-hardness assumption is exactly what the privacy analysis invokes (Proposition~\ref{prop:enum}, via the condition that no epoch handle is recoverable from public data) and what underpins claim correctness: an observer who cannot recover $\mathsf{REV}_B$ cannot recompute the handles that link intents to a recipient.

\paragraph{Claim proof system.}
Beyond the \DIDP{}, \HFIPay{} assumes a non-interactive zero-knowledge proof system for the claim relation $\Phi_{\mathsf{claim}}$ (Section~\ref{sec:protocol}) with two standard properties:
\begin{itemize}
    \item \textbf{Knowledge soundness:} an accepting claim proof implies a witness, so that a successful claim certifies knowledge of a \DIDP{} root whose epoch handle opens the committed blinded binding $\rho$ and authorizes the claim message $m$.
    \item \textbf{Zero-knowledge:} the proof reveals nothing about $\mathsf{REV}_B$ or the handle $u_{B,e}$ beyond the public inputs.
\end{itemize}
This is the classical paradigm of obtaining an unforgeable authorization---a signature in the standard sense---from a proof of knowledge of a committed secret~\cite{bellare-goldwasser,gmr-sig}, with the identity commitment $\IDcom_B$ in the role of the public key. We instantiate $\Phi_{\mathsf{claim}}$ with \ZKACE{}~\cite{zkace}, which realizes exactly this paradigm; but any proof system meeting the two properties above for $\Phi_{\mathsf{claim}}$ may be substituted. \HFIPay{}'s claim-correctness arguments depend only on these two named properties, not on the specific instantiation: we \emph{import} knowledge soundness as an assumption rather than re-proving the security of any particular claim-proof construction. Consequently the validity of \HFIPay{} is independent of the review status of any single instantiation, including \ZKACE{}.

\paragraph{Binding-attestation assumption.}
In the verified-quote deployment, \HFIPay{} also assumes an attestation mechanism for the binding layer $\mathcal{I}$.
An accepted attestation $\tau_B$ must be unforgeable for the issuer key, bound to the issuer's stated verification policy, and fresh for its validity horizon $T_{\mathsf{bind}}$.
Sender clients must verify the issuer identity, signature, normalized identifier, binding-key commitment, binding epoch, and freshness window before accepting a quote.
The claim-correctness statements for verified quotes are therefore conditioned on an honest and uncompromised binding issuer for the attested enrollment event, or on an equivalent issuer policy that makes incorrect attestations auditable and outside the cryptographic protocol claim.
Revocation, issuer key rotation, and multi-issuer acceptance policies are deployment parameters, but they must preserve the same signed statement over $(\mathsf{Norm}(e_B), K_{B,e}, e, T_{\mathsf{bind}})$.

\subsection{Participants}

The protocol involves five roles:

\begin{definition}[Sender]
A party $A$ who initiates a payment by specifying a recipient email address $e_B$, an asset identifier $a$, an amount $v$, and a settlement-network identifier $c$.
Here $a$ denotes the canonical source-chain asset descriptor, e.g., native ETH/SOL, an ERC-20 contract address, or an SPL mint.
If refund is enabled, $A$ also authorizes a refund destination $\gamma_A$ using a sender-controlled signing credential or smart-account policy on the relevant chain.
\end{definition}

\begin{definition}[Recipient]
A party $B$ who controls a \DIDP{} root $\mathsf{REV}_B$ (Section~\ref{sec:model:didp}).
From this root, $B$ derives an identity commitment $\IDcom_B$ as in \ZKACE{}~\cite{zkace}, and for each binding epoch $e$ a private claim-binding handle
\[
    u_{B,e} = H(\mathsf{Derive}(\mathsf{REV}_B, \mathsf{Ctx}_{\mathsf{bind}} \| e)),
\]
together with the corresponding off-chain binding-key commitment
\[
    K_{B,e} = H(\texttt{"hfipay:bind-key"} \| u_{B,e}).
\]
Deployments SHOULD choose $e$ from a coarse public epoch schedule (e.g., a daily or weekly rotation counter) rather than a recipient-specific random label, so that revealing $e$ as intent metadata does not itself create a reusable recipient tag.
\end{definition}

\begin{definition}[Relay Service]
A service $\mathcal{R}$ that maintains a private directory
\[
    \mathsf{Norm}(e) \mapsto (\IDcom, u_e, K_e, e, \texttt{mode}, \texttt{metadata}),
\]
and a private mapping from identifiers to pending intents.
$\mathcal{R}$ pays blockchain transaction fees on behalf of users (relayer-pays model).
\end{definition}

\begin{definition}[Binding Layer]
An optional application-layer or third-party service $\mathcal{I}$ that helps establish the binding between a normalized identifier and the recipient's current binding-key commitment $K_{B,e}$.
In the baseline deployment, this role may be implemented by ordinary email/OIDC onboarding.
In the verified-quote deployment, it issues an attestation
\[
    \tau_B \gets \mathsf{Attest}_{\mathcal{I}}(\texttt{"hfipay:bind-attest"} \| \mathsf{Norm}(e_B) \| K_{B,e} \| e \| T_{\mathsf{bind}}).
\]
Critically, $\mathcal{I}$ MUST independently verify that the presenting party controls $\hat{e}_B$ (e.g., by sending its own email OTP or verifying an OIDC token directly) and MUST NOT rely on the relay's assertion of identifier ownership.
Otherwise, a malicious relay could present its own $K$ to $\mathcal{I}$ while claiming it belongs to $\hat{e}_B$, reducing the verified-quote deployment to the baseline trust model.
\end{definition}

\begin{definition}[Observer]
Any party $\mathcal{O}$ with read access to public blockchain state but no access to $\mathcal{R}$'s private data.
This includes blockchain analytics services, other users, and the general public.
\end{definition}

\subsection{Enrollment and Binding Modes}\label{sec:binding-modes}

Before receiving funds under the main protocol analyzed in this paper, the recipient completes a one-time enrollment.
Let $\hat{e}_B = \mathsf{Norm}(e_B)$.
The recipient derives an identity commitment
\[
    \IDcom_B = H(\mathsf{REV}_B \| s_B \| \mathsf{domain}_{\mathsf{id}})
\]
following the \ZKACE{} model~\cite{zkace}, and chooses or refreshes a binding epoch $e$.
For that epoch it derives $u_{B,e}$ and $K_{B,e}$ as defined above.
When a deployment records the epoch label in public intent metadata, $e$ SHOULD be chosen from a coarse public schedule (e.g., a day or week counter) rather than as a recipient-unique nonce, so that publishing $e$ does not create a reusable recipient tag.

We consider two deployment modes:
\begin{itemize}
    \item \textbf{Baseline deployment.} The application verifies control of $\hat{e}_B$ once (e.g., by email OTP, magic link, or OIDC login) and stores a private directory entry containing $\hat{e}_B$, $\IDcom_B$, $u_{B,e}$, and epoch metadata.
    This mode gives a simple UX, but both the initial identifier enrollment and the per-intent recipient binding remain relay trust assumptions.
    \item \textbf{Verified-quote deployment.} The binding layer additionally attests to the normalized identifier, the current binding-key commitment, the binding epoch, and a validity horizon.
    The relay stores $(K_{B,e}, \tau_B)$ with the directory entry and later furnishes a sender-verifiable quote showing that each quoted $\rho_i$ is derived from the same hidden handle as the attested $K_{B,e}$.
\end{itemize}

Cross-device recovery is complementary rather than substitutive here.
Because $\mathsf{REV}_B$ is deterministically derivable (e.g., from a sealed artifact gated by a user passphrase), the recipient can re-derive the same $\IDcom_B$ and the relevant epoch-scoped binding handles on a new device without exposing a public identifier registry.

To reduce retrospective deanonymization if the relay database is later compromised, deployments SHOULD rotate the binding epoch periodically or on demand and retain old $(u_{B,e}, K_{B,e})$ values only until outstanding intents tied to that epoch have settled.
The public intent metadata may still carry the coarse epoch label $e_i$ needed for recovery; the privacy-sensitive secret is $u_{B,e_i}$, not the epoch label itself.

The formal protocol below assumes the \emph{registered-recipient mode}: the relay already has a directory entry for the identifier when the sender initiates payment.
Lazy first-receipt onboarding is a deployment extension, but its initial binding step inherits the trust model of whichever enrollment mode is used.

\subsection{Threat Model}\label{sec:threat}

We consider the following adversarial capabilities:

\begin{itemize}
    \item \textbf{Passive on-chain observer:} $\mathcal{O}$ can read all on-chain state, including contract storage, transaction history, and event logs.
    \item \textbf{Identifier knowledge:} $\mathcal{O}$ may know the email addresses, phone numbers, or social handles of specific users.
    \item \textbf{Directory compromise:} An adversary may compromise the relay's private directory or intent database; we analyze the impact in Section~\ref{sec:relay-compromise}.
    \item \textbf{Public commitment knowledge:} An adversary may know public identity commitments or destination addresses from other contexts and attempt to link them to \HFIPay{} claims.
    \item \textbf{Front-running of submissions:} Any transaction submitted through the mempool may be copied and submitted by another party.
    \item \textbf{Recipient substitution attempt:} A malicious relay may try to quote a blinded binding derived from its own or an accomplice's hidden handle before the sender funds the intent.
\end{itemize}

The protocol does \emph{not} defend against:
\begin{itemize}
    \item Compromise of the recipient's device, passphrase, or underlying deterministic identity root.
    \item Email account takeover or OIDC account takeover during initial enrollment.
    \item Traffic analysis at the network layer (IP correlation).
    \item Collusion between $\mathcal{R}$ and an on-chain observer that reveals the directory contents in real time.
    \item Relay censorship or availability failure.
    \item Binding-issuer unavailability; if $\mathcal{I}$ is unreachable, new registrations and epoch rotations in the verified-quote mode cannot obtain fresh attestations, and the system degrades to the baseline trust model for affected recipients.
    \item Incorrect attestations by a malicious or compromised binding issuer $\mathcal{I}$.
    \item Recipient substitution by a malicious relay in the baseline deployment, which remains a trusted-operator mode.
\end{itemize}

In particular, we focus on two properties:
\begin{enumerate}
    \item privacy against parties with only public-chain visibility and possible knowledge of human-friendly identifiers; and
    \item claim correctness after sender quote acceptance in the verified-quote deployment, i.e., after the sender has checked the quoted binding, the deterministic deposit address, and the registered on-chain intent tuple for an intent.
\end{enumerate}

The baseline deployment retains application-layer assumptions about both one-time identifier enrollment and per-intent recipient binding correctness before funding.
The verified-quote deployment removes the latter assumption: a sender who verifies the attestation $\tau_B$, the quote proof, the locally recomputed deposit address, and the registered on-chain intent tuple cannot be induced to fund an intent whose blinded binding was derived from a different hidden handle or whose quoted asset metadata was silently changed by the relay.
After funding, a relay can still delay or censor, but it cannot redirect the claim without either changing the committed blinded value before funding and failing sender-side verification, or breaking the security of the claim proof system and hash functions.
The liveness guarantee is therefore conditional on eventual transaction inclusion for relay-submitted transactions.
Deployments that require availability should expose a fallback path in which the sender, recipient, or an independent submitter can directly invoke the relevant registration, claim, or refund method when holding the required quote transcript, claim proof, refund authorization, or intent state.
Fee sponsorship is a usability layer and must not be the only path by which a valid terminal state can be submitted.

\subsection{Design Goals}

\begin{enumerate}
    \item \textbf{G1 (Metadata-Conditioned Enumeration Resistance):} Given a target identifier $e_B$, $\mathcal{O}$ cannot determine which on-chain intents belong to that identifier beyond leakage already present in public metadata.
    \item \textbf{G2 (Metadata-Conditioned Pre-Claim Unlinkability):} Given two pre-claim intents generated for the same identifier, $\mathcal{O}$ cannot determine from on-chain data alone that they share a recipient, after conditioning on matched public metadata.
    \item \textbf{G3 (Claim Authorization After Quote Acceptance):} After a sender has accepted a valid quote and an intent stores the resulting tuple $(\rho_i, a_i, v_i, e_i)$, only a party that can produce the corresponding \ZKACE{}-style claim proof can claim that intent.
    \item \textbf{G4 (Portable Recovery):} The same recipient identity can be re-derived across devices from sealed root material, without a public identifier directory.
    \item \textbf{G5 (Relayer-Assisted Operation):} In an integrated client flow, neither $A$ nor $B$ is required to manually manage gas tokens or chain-specific addresses.
\end{enumerate}

We formally capture and analyze G1 and G2 through metadata-conditioned observer games.
G3 is obtained by composing the intent-binding logic of \HFIPay{} with the authorization guarantees of \ZKACE{}~\cite{zkace}; we therefore state the composition explicitly but do not re-prove \ZKACE{}'s security from first principles here.
The privacy arguments reduce to hiding of the blinded binding under the domain-separated hash, uniform intent identifiers, and the assumed privacy of the relay directory.
Throughout, let $\lambda$ denote the deployment security parameter and let $\kappa$ denote the output length of the instantiated domain-separated hash.
Our concrete instantiation uses 256-bit \texttt{intentId} values, so in practice $\lambda = 256$ for deployed systems of interest.
We write $H$ for an abstract domain-separated hash function and appeal to collision resistance, preimage resistance, or target second-preimage resistance wherever the relevant security argument requires it.
We do \emph{not} claim that \HFIPay{} hides explicitly public metadata such as chain, asset type, amount, coarse epoch label, or timing.
Propositions~\ref{prop:enum} and~\ref{prop:unlink} instead condition on that public metadata; deployment-layer traffic shaping such as batching, standard amount buckets, or delayed notifications is an optional anonymity enhancement rather than a protocol requirement.

\section{Protocol}\label{sec:protocol}

\subsection{Overview}

The registered-recipient protocol proceeds in five phases---\emph{Recipient Setup}, \emph{Quote Generation and Verification}, \emph{Funding}, \emph{Claim}, and optional \emph{Refund}.
Figure~\ref{fig:overview} omits the one-time setup phase for clarity.
Let $d_{\mathsf{dep}}$ denote a deployment-domain tag unique to a given \HFIPay{} deployment.
Let $\mathsf{asset}(c,a)$ denote the canonical on-chain encoding of an asset on chain $c$.
We write $H$ for the abstract protocol hash over a canonical byte encoding of the listed fields.
Concrete implementations may instantiate $H$ with a chain-native primitive such as \texttt{keccak256} on EVM, but they MUST preserve the exact field order and the protocol's domain-separation labels.
Authorization-domain labels use the \texttt{hfipay:*} namespace; implementation-local PDA or storage seeds may use different byte strings because they are not replay-sensitive authorization messages.
Implementations MUST ensure that the byte encoding used by the off-chain client, the on-chain contract, and the ZK circuit are identical for every authorization message; Table~\ref{tab:fields} lists the canonical field order.

\begin{table}[H]
\centering
\caption{Canonical field order for authorization messages.}
\label{tab:fields}
\begin{tabularx}{\textwidth}{l>{\raggedright\arraybackslash}X}
\toprule
\textbf{Message} & \textbf{Canonical field order (left to right)} \\
\midrule
Claim & \texttt{tag}, $d_{\mathsf{dep}}$, $c$, $a$, $e$, \texttt{intentId}, $\rho$, $v$, $\beta$, $T_{\mathsf{exp}}$, $n$ \\
Refund & \texttt{tag}, $d_{\mathsf{dep}}$, $c$, $a$, \texttt{intentId}, $\rho$, $v$, $\gamma_A$, $T_{\mathsf{exp}}$ \\
\bottomrule
\end{tabularx}
\end{table}

\noindent Here $c$ denotes the chain identifier in a deployment-specific encoding (e.g., EIP-155 \texttt{chainId} on EVM, a genesis-hash prefix on Solana), $a$ denotes the asset descriptor, and $e$ denotes the coarse epoch label.
Each field is serialized as a fixed-width or length-prefixed byte string; the exact encoding is deployment-specific but MUST be agreed upon by all three stacks (client, contract, circuit).

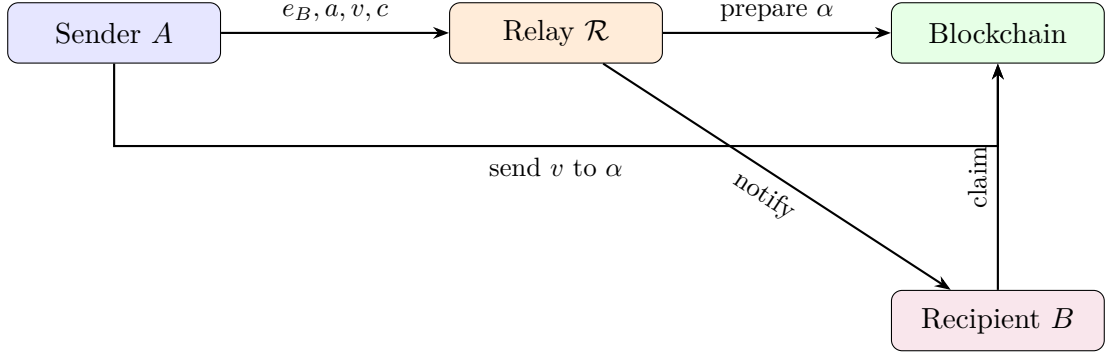
\begin{figure}[H]
\centering
\begin{tikzpicture}[
    node distance=1.2cm and 2.5cm,
    box/.style={draw, rounded corners, minimum width=2.8cm, minimum height=0.8cm, align=center, font=\small},
    arrow/.style={-{Stealth[length=6pt]}, thick},
    lbl/.style={font=\footnotesize, midway, above, sloped}
]
    \node[box, fill=blue!10] (sender) {Sender $A$};
    \node[box, fill=orange!15, right=3cm of sender] (relay) {Relay $\mathcal{R}$};
    \node[box, fill=green!10, right=3cm of relay] (chain) {Blockchain};
    \node[box, fill=purple!10, below=3cm of chain] (recip) {Recipient $B$};

    \draw[arrow] (sender) -- (relay) node[lbl] {$e_B, a, v, c$};
    \draw[arrow] (relay) -- (chain) node[lbl] {prepare $\alpha$};
    \draw[arrow] (sender) |- ++(0,-1.5) -| (chain) node[font=\footnotesize, pos=0.25, below] {send $v$ to $\alpha$};
    \draw[arrow] (relay) -- (recip) node[lbl, below, sloped] {notify};
    \draw[arrow] (recip) -- (chain) node[lbl] {claim};
\end{tikzpicture}
\caption{High-level protocol flow.}
\label{fig:overview}
\end{figure}

Figure~\ref{fig:overview} is intentionally simplified: the relay-to-chain ``prepare $\alpha$'' step subsumes quote generation, any pre-funding intent registration, and the sender's verification of the registered on-chain tuple before funds are transferred.

\subsection{Recipient Setup and Binding}\label{sec:setup}

This phase is performed once per recipient identity and then refreshed whenever the binding epoch changes.

\begin{enumerate}
    \item Recipient $B$ loads or reconstructs $\mathsf{REV}_B$.
    In integrated deployments, this may come from a locally sealed artifact or be recovered on a new device via a deterministic, passphrase-gated recovery procedure.
    \item $B$ derives an identity commitment $\IDcom_B$ and selects a current binding epoch $e$.
    \item $B$ derives the epoch-scoped binding handle $u_{B,e}$ and the corresponding binding-key commitment $K_{B,e}$.
    \item In the baseline mode, the application verifies control of $\mathsf{Norm}(e_B)$ and inserts the corresponding private directory entry.
    \item In the verified-quote mode, $B$ also obtains an attestation $\tau_B$ over the normalized identifier, the binding-key commitment, the binding epoch, and a validity horizon.
    The relay accepts or refreshes the directory entry only if that attestation is valid and current.
    \item The relay retains the current binding epoch and, if necessary, any unsettled prior epochs required to complete claims for already-funded intents.
\end{enumerate}

The sender does not need to know $\IDcom_B$ or $u_{B,e}$.
In the verified-quote mode, the sender sees only $(K_{B,e}, \tau_B)$ and a quote proof that links a quoted $\rho_i$ to the same hidden handle.

\subsection{Quote Generation and Intent Creation}\label{sec:intent}

When sender $A$ initiates a payment to email $e_B$ for asset $a$ and amount $v$ on chain $c$, the steps below describe processing a single intent.
We write $\rho$ when referring to the locally computed value and $\rho_i$ when listing it as an element of an indexed tuple (e.g., a quote or observer view); the two denote the same value for the intent under construction.

\begin{enumerate}
    \item $A$ submits $(e_B, a, v, c, \gamma_A, T_{\mathsf{exp}})$ to relay $\mathcal{R}$ via an authenticated API call, where $\gamma_A$ is an optional refund destination and $T_{\mathsf{exp}}$ is an expiration time.
    \item $\mathcal{R}$ looks up the private directory entry for $\mathsf{Norm}(e_B)$ and obtains the current binding tuple $(\IDcom_B, u_{B,e}, K_{B,e}, e)$ together with optional metadata such as $\tau_B$.
    \item $\mathcal{R}$ generates a cryptographically random intent identifier:
    \[
        \texttt{intentId} \xleftarrow{\$} \{0,1\}^{256}.
    \]
    \item $\mathcal{R}$ computes the deterministic payment address:
    \[
        \alpha = \mathsf{DeriveAddress}(c, \texttt{intentId}).
    \]
    \item $\mathcal{R}$ computes an intent-specific blinded recipient binding:
    \[
        \rho \gets H(\texttt{"hfipay:bind"} \| u_{B,e} \| \texttt{intentId}).
    \]
    \item In the verified-quote mode, $\mathcal{R}$ constructs an off-chain quote proof $\pi_i^{\mathsf{quote}}$ for a relation $\Phi_{\mathsf{quote}}$.
    The quote-verification transcript is public to the sender but not posted on-chain.
    Its public inputs are
    \[
        (\mathsf{Norm}(e_B), \rho_i, K_{B,e_i}, e_i, \texttt{intentId}_i, T_{\mathsf{bind},i}, \tau_B),
    \]
    and its witness is the hidden binding handle $u_{B,e_i}$.
    The relation checks
    \[
    \begin{aligned}
        \mathsf{VerifyAttest}_{\mathcal{I}}\big(
            &\texttt{"hfipay:bind-attest"} \| \mathsf{Norm}(e_B) \| K_{B,e_i} \\
            &\| e_i \| T_{\mathsf{bind},i}, \tau_B
        \big) = 1,\\
        K_{B,e_i} &= H(\texttt{"hfipay:bind-key"} \| u_{B,e_i}), \\
        \rho_i &= H(\texttt{"hfipay:bind"} \| u_{B,e_i} \| \texttt{intentId}_i).
    \end{aligned}
    \]
    The relay or directory service holds this witness and generates $\pi_i^{\mathsf{quote}}$ for the sender; the recipient need not be online during quote creation.
    Here $\tau_B$ attests the identifier-to-$K_{B,e}$ binding, while $\pi_i^{\mathsf{quote}}$ proves that the quoted $\rho_i$ is derived from that same hidden handle.
    As stated in the threat model, this verified-quote mode still relies on the honesty of the binding issuer and registration flow that produced $\tau_B$.
    \item $\mathcal{R}$ returns a quote
    \[
        q_i = (\texttt{intentId}_i, \alpha_i, \rho_i, a_i, v_i, c_i, e_i, \gamma_{A,i}, T_{\mathsf{exp},i}, K_{B,e_i}, T_{\mathsf{bind},i}, \tau_B, \pi_i^{\mathsf{quote}}, T_{\mathsf{quote}}),
    \]
    where $T_{\mathsf{quote}}$ is a quote-expiration time.
    \item In the verified-quote mode, $A$ verifies $\tau_B$ against $(\mathsf{Norm}(e_B), K_{B,e_i}, e_i, T_{\mathsf{bind},i})$, verifies $\pi_i^{\mathsf{quote}}$, checks quote freshness, locally recomputes $\alpha_i = \mathsf{DeriveAddress}(c_i, \texttt{intentId}_i)$, and confirms that the quote fields $(a_i, v_i, c_i, \gamma_{A,i}, T_{\mathsf{exp},i})$ match the sender's requested payment.
    In the baseline mode, $A$ trusts $\mathcal{R}$ for the correctness of $\rho_i$.
    \item If refund is enabled, $A$'s client signs a refund authorization over
    \[
        \sigma_A^{\mathsf{ref}} \gets \mathsf{Auth}_{A}\big(\texttt{"hfipay:refund"} \| d_{\mathsf{dep}} \| c \| a \| \texttt{intentId} \| \rho \| v \| \gamma_A \| T_{\mathsf{exp}}\big),
    \]
    using the sender-controlled credential or smart-account policy associated with $\gamma_A$.
    Let $h^{\mathsf{ref}} = H(\sigma_A^{\mathsf{ref}})$ if refund is enabled, and $h^{\mathsf{ref}} = \bot$ otherwise.
    The full authorization $\sigma_A^{\mathsf{ref}}$ is retained by the sender's client and optionally escrowed with $\mathcal{R}$; only its commitment $h^{\mathsf{ref}}$ is stored on-chain.
    At refund time, the submitting party reveals $\sigma_A^{\mathsf{ref}}$ and the on-chain program reconstructs the expected refund message from the stored intent fields $(c, a, \texttt{intentId}, \rho, v, \gamma_A, T_{\mathsf{exp}})$, verifies $\sigma_A^{\mathsf{ref}}$ against $\gamma_A$'s authorization policy over that message, and checks $H(\sigma_A^{\mathsf{ref}}) = h^{\mathsf{ref}}$.
    If the sender's client and $\mathcal{R}$ both lose $\sigma_A^{\mathsf{ref}}$, the refund path becomes unavailable; the intent then remains in the \texttt{EXPIRED} state until a deployment-specific recovery policy (if any) applies.
    \item $\mathcal{R}$ registers the on-chain intent metadata $(\texttt{intentId}, \rho, a, v, e, T_{\mathsf{exp}}, \gamma_A, h^{\mathsf{ref}})$ or its chain-specific equivalent.
    \item In deployments where registration occurs as a separate transaction, $A$ MUST confirm before funding that the readable on-chain tuple for \texttt{intentId} exactly matches $(\rho_i, a_i, v_i, e_i, T_{\mathsf{exp},i}, \gamma_{A,i}, h^{\mathsf{ref}})$.
    Equivalent atomic create-and-fund flows are also acceptable provided the same tuple is bound in the transaction that transfers funds.
    \item $\mathcal{R}$ stores the private record $(e_B, \texttt{intentId}, \rho, a, v, c, e, \gamma_A, T_{\mathsf{exp}}, t)$ and returns the finalized funding instructions to $A$.
\end{enumerate}

The critical design point is that the sender verifies the quoted binding and full quoted intent tuple \emph{before funding} in the verified-quote mode, and the chain commits to $(\rho, a, v, e, T_{\mathsf{exp}}, \gamma_A, h^{\mathsf{ref}})$ \emph{before or at funding time}.
After that point, the relay cannot substitute a different recipient identity for the same funded intent, swap the quoted asset, or alter expiry/refund semantics without either failing sender-side verification or changing on-chain state.

The phrase ``authenticated API call'' above should be understood operationally rather than cryptographically narrowly.
Deployments SHOULD require sender authentication or fee-backed quote creation, rate-limit quote issuance, garbage-collect unfunded intents, and defer recipient notifications until funding has been observed (or until a non-refundable relay fee has been paid) in order to limit spam and notification abuse.

The address derivation function $\mathsf{DeriveAddress}$ is chain-specific:

\begin{definition}[EVM Address Derivation]
For EVM-compatible chains:
\[
    \alpha_{\text{EVM}} = \mathsf{CREATE2}(\texttt{factory}, \texttt{intentId}, \keccak(\texttt{initcode}))
\]
where \texttt{factory} is the deployed \HFIPay{} factory contract and \texttt{initcode} is the EIP-1167 minimal proxy creation code.
The factory also stores the per-intent metadata for $\rho$, asset descriptor $a$, amount $v$, epoch label $e$, expiry, and refund information.
\end{definition}

\begin{definition}[Solana Address Derivation]
For Solana:
\[
    \alpha_{\text{SOL}} = \mathsf{FindPDA}(\texttt{programId}, [\texttt{"hfipay"} \| \texttt{intentId}])
\]
where $\mathsf{FindPDA}$ is the standard Solana Program Derived Address algorithm.
The corresponding state PDA stores $\rho$, the quoted asset descriptor $a$, the quoted amount $v$, the epoch label $e$, expiry, refund metadata, and any needed vault references.
\end{definition}

We write $\mathsf{Auth}_{A}(m)$ for a chain-appropriate refund authorization by the sender.
Claim authorization, by contrast, is provided by a zero-knowledge proof rather than by a direct signature over the recipient's identifier.

\subsection{Funding}\label{sec:funding}

After accepting a quote, sender $A$ transfers $v$ units of the quoted asset $a$ to address $\alpha$ on chain $c$.

On EVM chains, $\alpha$ can receive funds after the factory has recorded the intent metadata, even if the account contract itself has not yet been deployed, because CREATE2 addresses are deterministic.
On Solana, the relay $\mathcal{R}$ pre-creates the state PDA (paying rent) before $A$'s transfer.
For SPL assets, $\mathcal{R}$ also pre-creates a PDA-owned token vault $\mathcal{V}_M$ for each supported mint $M$.

\begin{algorithm}[H]
\caption{Funding Phase}
\begin{algorithmic}[1]
\Require Sender $A$, address $\alpha$, blinded binding $\rho$, quoted asset $a$, quoted amount $v$, chain $c$, optional SPL mint $M$
\If{$c$ is EVM-compatible}
    \State $\mathcal{R}$ invokes \texttt{registerIntent}$(\texttt{intentId}, \rho, a, v, e, T_{\mathsf{exp}}, \gamma_A, h^{\mathsf{ref}})$
    \State $A$ calls \texttt{transfer}$(v)$ to $\alpha$ \Comment{ETH or ERC-20}
\ElsIf{$c$ is Solana and asset is SOL}
    \State $\mathcal{R}$ invokes \texttt{create\_state}$(\texttt{intentId}, \rho, a, v, e, T_{\mathsf{exp}}, \gamma_A, h^{\mathsf{ref}})$
    \State $A$ transfers $v$ lamports to $\alpha$
\ElsIf{$c$ is Solana and asset is an SPL token $M$}
    \State $\mathcal{R}$ invokes \texttt{create\_state}$(\texttt{intentId}, \rho, a, v, e, T_{\mathsf{exp}}, \gamma_A, h^{\mathsf{ref}})$
    \State $\mathcal{R}$ invokes \texttt{create\_vault}$(\texttt{intentId}, M)$ \Comment{Creates PDA-owned token account $\mathcal{V}_M$}
    \State $A$ transfers $v$ units of $M$ to $\mathcal{V}_M$
\EndIf
\end{algorithmic}
\end{algorithm}

\subsection{Wallet-Native One-Time Protected Deposit Address Rail}\label{sec:one-time-address-rail}

The verified-quote flow above is the high-assurance path: the sender accepts a complete tuple $(\rho_i,a_i,v_i,e_i,T_{\mathsf{exp},i},\gamma_{A,i})$ before funding.
Existing wallet name-resolution APIs, however, often expose only a weaker interface of the form
\[
    \text{human-readable recipient} \longrightarrow \text{address}.
\]
To support native wallet send flows without requiring a companion web application as the primary sender interface, \HFIPay{} can instantiate a second deployment rail based on \emph{one-time protected deposit addresses}.

In this rail, the sender enters a human-friendly identifier such as an email address, phone number, or social handle in the wallet recipient field.
The \HFIPay{} resolver samples a fresh random \texttt{intentId}, derives a one-time address $\alpha=\mathsf{DeriveAddress}(c,\texttt{intentId})$, stores a short-lived private resolver record for the normalized identifier, and returns $\alpha$ to the wallet.
The address is therefore deterministic with respect to the fresh intent identifier, but not deterministic with respect to the human identifier:
\[
    \alpha = \mathsf{DeriveAddress}(c,\texttt{intentId}), \qquad
    \texttt{intentId} \xleftarrow{\$} \{0,1\}^{\lambda}.
\]
The resolver MUST NOT use a construction of the form $\alpha=f(\mathsf{Norm}(e_B))$, because that would create a reusable public recipient address and defeat enumeration resistance.

After the wallet sends funds to $\alpha$, the relay/indexer observes the funding transaction and finalizes the pending resolver record by recording the actual asset, amount, funding sender, funding transaction, claim deadline, and refund deadline.
For a minimal EVM deployment, this rail can be restricted to native ETH on a single chain and to one valid funding transaction per address.
The default refund destination is the funding transaction sender.
Thus, if the recipient does not claim before the configured claim deadline, the protocol can refund the funded amount to the wallet that funded the one-time address.

This rail is intentionally weaker than the verified-quote rail because the complete asset tuple may be finalized after funding rather than sender-verified before funding.
It nevertheless preserves the central user-facing properties needed for small native wallet sends:
\begin{itemize}
    \item the human identifier is not published on-chain;
    \item each payment uses a fresh one-time address;
    \item the recipient claims through the same identity-binding and claim-authorization layer;
    \item unclaimed funds can be refunded to the funding sender; and
    \item the wallet can expose a native ``send to email'' or ``send to handle'' experience.
\end{itemize}

Deployments SHOULD treat the rail as policy-limited: short address TTLs, small-value limits, native-asset-only support at launch, rate-limited resolver creation, and manual-review handling for duplicate, late, unsupported-token, or exchange-originated funding.
It is best understood as a wallet-native compatibility rail, while the verified-quote rail remains the high-assurance path for large payments, partner integrations, ERC-20 support, and stricter sender-side verification.

\subsection{Claim and Refund}\label{sec:claim}

When recipient $B$ learns of a pending intent (e.g., from an email notification or a relay query), the claim proceeds as follows:

\paragraph{Authorization Interface Summary.}
\HFIPay{} uses \ZKACE{} only through a narrow interface.
At claim time, the verifier receives the public tuple already listed in the claim relation together with a proof $\pi_i$.
The claimant's witness contains the deterministic identity root (or the material from which it is rederived), the corresponding opening information for $\IDcom_B$, and the epoch-scoped handle needed to satisfy the blinded-binding relation.
Cross-device recovery is orthogonal here: it only re-derives the same deterministic identity root, and is not involved in producing the quote proof.

\begin{enumerate}
    \item $B$ loads or recovers $\mathsf{REV}_B$, reads the epoch label $e_i$ from the intent metadata (or from a retained quote transcript or notification), and re-derives $\IDcom_B$ together with $u_{B,e_i}$.
    \item $B$ chooses a destination address $\beta_i$ (or an integrated client derives a fresh one locally).
    \item $B$ forms a claim message
    \[
        m_i \gets H(\texttt{"hfipay:claim"} \| d_{\mathsf{dep}} \| c_i \| a_i \| e_i \| \texttt{intentId}_i \| \rho_i \| v_i \| \beta_i \| T_{\mathsf{exp}} \| n_i),
    \]
    where $n_i$ is a replay-protection nonce.
    \item $B$ generates a \ZKACE{} claim proof $\pi_i$ for a relation $\Phi_{\mathsf{claim}}$ whose public inputs include $(d_{\mathsf{dep}}, c_i, \IDcom_B, \rho_i, a_i, e_i, \texttt{intentId}_i, v_i, \beta_i, T_{\mathsf{exp}}, n_i)$, or whose verifier fixes $(d_{\mathsf{dep}}, c_i)$ as public constants, and whose witness proves that:
    \begin{enumerate}
        \item $\IDcom_B$ is a valid registered commitment for the deterministic identity root held by $B$;
        \item $u_{B,e_i} = H(\mathsf{Derive}(\mathsf{REV}_B, \mathsf{Ctx}_{\mathsf{bind}} \| e_i))$;
        \item $\rho_i = H(\texttt{"hfipay:bind"} \| u_{B,e_i} \| \texttt{intentId}_i)$; and
        \item the identity underlying $\IDcom_B$ authorizes $m_i$ according to \ZKACE{}'s replay rules.
    \end{enumerate}
    \item Anyone (typically $\mathcal{R}$) may submit the claim transaction
    \[
        \texttt{claim}(\texttt{intentId}_i, \IDcom_B, a_i, v_i, \beta_i, T_{\mathsf{exp}}, n_i, \pi_i).
    \]
    The on-chain program first checks that the intent is in the \texttt{FUNDED} state, checks that the stored blinded binding, stored asset descriptor, stored epoch label, and quoted amount for the intent equal $(\rho_i, a_i, e_i, v_i)$, verifies the proof, atomically marks the intent \texttt{CLAIMED}, and then either releases exactly $v_i$ units of $a_i$ directly to $\beta_i$ or performs an immediately following relayed withdrawal of exactly $v_i$ units of $a_i$ to $\beta_i$.
    Replaying the same transaction after this transition must fail before proof verification or before any value transfer.
    \item After the release transaction succeeds on-chain and reaches the required confirmation depth, $\mathcal{R}$ marks the intent as claimed in its off-chain database.
\end{enumerate}

Because the proof binds $d_{\mathsf{dep}}$, $c_i$, $\texttt{intentId}_i$, $\rho_i$, $a_i$, $v_i$, $\beta_i$, $T_{\mathsf{exp}}$, and $n_i$ either as explicit public inputs or verifier-fixed public constants, the protocol is front-running resistant: a miner or mempool observer may copy the transaction, but successful verification still routes only the quoted asset and amount to $\beta_i$, not to an attacker-chosen address.
The nonce $n_i$ provides replay-domain separation inside the claim message; it does not replace the on-chain idempotency check that enforces one terminal claim per \texttt{intentId}.

If a deposit address accidentally receives funds beyond the quoted amount $v_i$, that surplus is outside the intended claim relation and requires a separate recovery policy.
The claim path itself does not silently sweep extra balance to the claimant.

Refund remains sender-authorized.
If an intent expires unclaimed, anyone may submit the pre-authorized refund transaction using $\sigma_A^{\mathsf{ref}}$, and the on-chain program releases exactly the quoted amount $v_i$ of asset $a_i$ to $\gamma_A$.
Any surplus beyond $v_i$ that was accidentally sent to the deposit address is not covered by the refund authorization and requires a separate recovery policy, consistent with the claim-side treatment described above.

\begin{lemma}[Quote-to-Claim Composition]\label{lem:quote-claim}
In the verified-quote deployment, suppose a sender accepts a quote after verifying $(K_{B,e_i}, T_{\mathsf{bind},i}, \tau_B, \pi_i^{\mathsf{quote}})$ for fixed $(\texttt{intentId}_i, \rho_i)$, and later the chain accepts a claim proof $\pi_i$ for the same $(\texttt{intentId}_i, \rho_i, e_i)$.
Under knowledge soundness of the quote and claim proof systems (Section~\ref{sec:model:didp}) and target second-preimage resistance of the domain-separated hash $H$ on structured inputs of the form $(\texttt{tag} \| u \| \texttt{intentId})$---or, equivalently, in the random-oracle model up to the usual query bound---both successful verifications must be witnessing the same epoch-scoped hidden handle $u_{B,e_i}$ except with negligible probability.
Consequently, the successfully claiming identity is exactly the deterministic identity family whose binding-key commitment was attested in the accepted quote, even though $\IDcom_B$ itself need not be disclosed to the sender at quote time.
\end{lemma}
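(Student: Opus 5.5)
The plan is a standard extraction-then-collision argument.

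\emph{Step 1: extract the quote witness.} Fix an accepted quote transcript with public inputs $(\mathsf{Norm}(e_B), \rho_i, K_{B,e_i}, e_i, \texttt{intentId}_i, T_{\mathsf{bind},i}, \tau_B)$ and proof $\pi_i^{\mathsf{quote}}$. Knowledge soundness of the quote system gives an extractor that, except with negligible probability, outputs a witness $u$ satisfying $\Phi_{\mathsf{quote}}$. Such a $u$ satisfies $K_{B,e_i} = H(\texttt{"hfipay:bind-key"} \| u)$ and $\rho_i = H(\texttt{"hfipay:bind"} \| u \| \texttt{intentId}_i)$. The attestation check inside $\Phi_{\mathsf{quote}}$, together with the binding-attestation assumption (honest issuer, unforgeable $\tau_B$), ensures that $K_{B,e_i}$ is the key commitment actually enrolled by the holder of $\hat{e}_B$ for epoch $e_i$. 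Hence $K_{B,e_i} = H(\texttt{"hfipay:bind-key"} \| u_{B,e_i})$ for the enrolled recipient's true handle.

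\emph{Step 2: extract the claim witness.} The chain accepts $\pi_i$ only after checking that the stored $(\rho_i, a_i, e_i, v_i)$ match the public inputs. The sender confirmed before funding that the stored tuple equals the quoted one, so the claim proof is verified against the same $(\texttt{intentId}_i, \rho_i, e_i)$. Knowledge soundness of the claim system (Section~\ref{sec:model:didp}) yields a witness $(\mathsf{REV}', u')$ with $u' = H(\mathsf{Derive}(\mathsf{REV}', \mathsf{Ctx}_{\mathsf{bind}} \| e_i))$ and $\rho_i = H(\texttt{"hfipay:bind"} \| u' \| \texttt{intentId}_i)$.

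\emph{Step 3: collision reduction.} Suppose $u' \neq u$. Then $(\texttt{"hfipay:bind"}, u, \texttt{intentId}_i)$ and $(\texttt{"hfipay:bind"}, u', \texttt{intentId}_i)$ are distinct structured inputs with the same image $\rho_i$. Build a target-second-preimage adversary $\mathcal{B}$ as follows:
\begin{enumerate}
    \item $\mathcal{B}$ simulates the honest enrollment and relay.
    \item The target is $x = \texttt{"hfipay:bind"} \| u \| \texttt{intentId}_i$, fixed by the honest $u$ and a fresh random $\texttt{intentId}_i$ before the adversary acts.
    \item $\mathcal{B}$ runs the claim extractor and outputs $u'$.
\end{enumerate}
The same argument applies to the quote witness itself: if $u \neq u_{B,e_i}$, then $K_{B,e_i}$ yields a second preimage under the $\texttt{"hfipay:bind-key"}$ tag. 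Chaining both steps gives $u' = u = u_{B,e_i}$ except with negligible probability. In the random-oracle model, the bound is $q^2/2^{\kappa}$ by the birthday argument over the adversary's $q$ queries.

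\emph{Main obstacle.} The delicate step is justifying \emph{target} second-preimage resistance rather than full collision resistance. The target must be fixed independently of the adversary. This holds only because the honest $u_{B,e_i}$ is sampled at enrollment, via the \DIDP{} root, and $\texttt{intentId}_i$ is sampled uniformly by the relay. In the relay-malicious case, where the relay may choose both preimages, I would instead fall back on collision resistance. I would also note that composing the two extractors requires simulation-extractable or at least independently rewindable proofs.

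\emph{Conclusion.} Once $u'=u_{B,e_i}$, context isolation and determinism of the \DIDP{} imply that $\mathsf{REV}'$ derives the attested handle. So the claimant belongs to the identity family whose $K_{B,e_i}$ was attested, and $\IDcom_B$ never needs to appear in the quote.
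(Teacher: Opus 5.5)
Your proposal is correct and follows the same argument as the paper's sketch. You extract $u$ from the quote proof and $u'$ from the claim proof, and observe that $u' \neq u$ gives a second preimage of $\rho_i$ under the bind-tagged hash. Your explicit link $u = u_{B,e_i}$ through the attestation and the bind-key hash, and your collision-resistance fallback for a relay that knows $u$ and chooses $\texttt{intentId}_i$, are useful refinements the paper leaves implicit. One small correction: in the target setting the random-oracle bound is linear, $q_H 2^{-\kappa}$, as the paper states, so your birthday bound $q^2/2^{\kappa}$ is valid but loose.
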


\begin{proof}[Proof sketch]
The quote proof establishes the existence of some $u$ such that
\[
    K_{B,e_i} = H(\texttt{"hfipay:bind-key"} \| u)
    \quad\text{and}\quad
    \rho_i = H(\texttt{"hfipay:bind"} \| u \| \texttt{intentId}_i).
\]
The claim proof establishes the existence of some deterministic identity root $\mathsf{REV}_B$ whose epoch-scoped derivative
\[
    u' = H(\mathsf{Derive}(\mathsf{REV}_B, \mathsf{Ctx}_{\mathsf{bind}} \| e_i))
\]
satisfies the same relation
\[
    \rho_i = H(\texttt{"hfipay:bind"} \| u' \| \texttt{intentId}_i).
\]
For a fixed accepted quote, $\rho_i$ is already committed before the claim.
An adversary that does not know the quoted handle and tries to produce a different handle $u' \neq u$ satisfying
\[
    H(\texttt{"hfipay:bind"} \| u' \| \texttt{intentId}_i)=\rho_i
\]
is solving a target second-preimage problem on this structured domain.
In the random-oracle model, after $q_H$ oracle queries, the probability of finding such a distinct $u'$ is at most $q_H 2^{-\kappa}$, up to the negligible probability of malformed encodings or duplicate queries.
Under a standard hash instantiation, this step is captured by target second-preimage resistance rather than ordinary collision resistance alone.
Except with that negligible probability, the accepted claim is therefore bound to the same attested hidden handle that appeared in the sender-verified quote.
\end{proof}

\begin{algorithm}[H]
\caption{Claim Phase (EVM)}
\begin{algorithmic}[1]
\Require Recipient $B$, intent $(\texttt{intentId}, \rho, a, v, e, c)$, destination $\beta$
\State $\alpha \gets \mathsf{CREATE2}(\texttt{factory}, \texttt{intentId}, \keccak(\texttt{initcode}))$
\State $m \gets \keccak(\texttt{"hfipay:claim"} \| d_{\mathsf{dep}} \| c \| a \| e \| \texttt{intentId} \| \rho \| v \| \beta \| T_{\mathsf{exp}} \| n)$ \Comment{$c$ encodes \texttt{chainId}}
\State $B$ re-derives $\IDcom_B$ and $u_{B,e}$
\State $\pi \gets \ZKACE\texttt{.Prove}(d_{\mathsf{dep}}, c, \IDcom_B, \rho, a, e, \texttt{intentId}, v, \beta, T_{\mathsf{exp}}, n)$
\State $\mathcal{R}$ calls \texttt{factory.claimWithProof}$(\texttt{intentId}, \IDcom_B, a, v, \beta, T_{\mathsf{exp}}, n, \pi)$
\State Factory requires \texttt{status == FUNDED}, checks stored $(\rho, a, e, v)$, verifies $\pi$, marks \texttt{CLAIMED}, and releases exactly $v$ units of $a$ to $\beta$
\end{algorithmic}
\end{algorithm}

On Solana, the same logical relation is verified against the state PDA:

\begin{algorithm}[H]
\caption{Claim Phase (Solana)}
\begin{algorithmic}[1]
\Require Recipient $B$, intent state PDA containing $(\texttt{intentId}, \rho, a, e, v)$ and verifier-fixed deployment/chain domain
\State $B$ re-derives $\IDcom_B$ and $u_{B,e}$
\State $\pi \gets \ZKACE\texttt{.Prove}(d_{\mathsf{dep}}, c, \IDcom_B, \rho, a, e, \texttt{intentId}, v, \beta, T_{\mathsf{exp}}, n)$
\State $\mathcal{R}$ invokes \texttt{claim}$(\texttt{intentId}, \IDcom_B, a, v, \beta, T_{\mathsf{exp}}, n, \pi)$
\State Program requires \texttt{status == FUNDED}, verifies stored $(\rho, a, e, v)$, marks \texttt{CLAIMED}, and transfers exactly $v$ units of the quoted SOL or SPL asset to $\beta$
\end{algorithmic}
\end{algorithm}

\subsection{Intent Lifecycle}

Each \texttt{intentId} is single-use and progresses through one of two terminal paths:

\[
    \texttt{CREATED} \to \texttt{FUNDED} \to \texttt{CLAIMED},
    \qquad
    \texttt{FUNDED} \to \texttt{EXPIRED} \to \texttt{REFUNDED}
\]

An intent enters \texttt{EXPIRED} only if it remains unclaimed after $T_{\mathsf{exp}}$.
If a refund destination $\gamma_A$ was supplied at intent creation, anyone may submit the pre-authorized refund transaction after expiry using the stored authorization $\sigma_A^{\mathsf{ref}}$, and the on-chain program releases exactly the quoted amount $v$ of the quoted asset to $\gamma_A$.
This closes the lifecycle without granting the relay unilateral control over expired funds.

\section{Privacy Analysis}\label{sec:privacy}

\subsection{Privacy Scope and Leakage}

\HFIPay{} provides computational privacy against public-chain observers for identifier-to-intent discovery before claim, conditioned on explicitly public metadata.
It does not provide metadata privacy: chain, asset type, amount, coarse epoch label, timestamp, funding source behavior, and notification timing may all act as statistical fingerprints.
If a payment uses an unusual asset, a highly specific amount, or timing that identifies a narrow anonymity set, the protocol does not prevent linkage by statistical analysis of those public fields.

The pre-claim privacy claims below therefore separate computational unlinkability from metadata leakage.
They state that the blinded binding and one-time deposit address do not add a reusable recipient tag beyond the public metadata distribution.
After claim, the leakage surface changes: a reused destination $\beta$ or a verifier interface that exposes a stable $\IDcom_B$ can link claims belonging to the same recipient family even when the corresponding pre-claim intents were unlinkable.

\subsection{On-Chain Data Exposure}

We enumerate what an observer $\mathcal{O}$ can learn from on-chain data alone.

\begin{table}[H]
\centering
\caption{On-chain data visibility.}
\label{tab:visibility}
\begin{tabularx}{\textwidth}{>{\raggedright\arraybackslash}X c >{\raggedright\arraybackslash}X}
\toprule
\textbf{Data} & \textbf{On-Chain?} & \textbf{Reveals Identity?} \\
\midrule
\texttt{intentId} & Yes & No (random) \\
Blinded binding $\rho_i$ & Yes & No (one-time blinded value) \\
Payment address $\alpha$ & Yes & No (derived from random intent) \\
Asset identifier $a_i$ & Yes & Reveals asset type only \\
Amount $v$ & Yes & N/A \\
Epoch label $e_i$ & Yes & No, if drawn from a coarse public schedule rather than a recipient-unique nonce \\
Claim destination $\beta$ (after claim) & Yes & Address only\footnotemark \\
Identity commitment $\IDcom_B$ (claim-time, if exposed by the verifier interface) & Sometimes & Commitment only \\
Email $e_B$ & \textbf{No} & --- \\
$e_B \to (\IDcom_B, u_{B,e}, K_{B,e})$ directory entry & \textbf{No} & --- \\
Timestamp & Yes & N/A \\
\bottomrule
\end{tabularx}
\end{table}
\footnotetext{Linking $\beta$ or $\IDcom_B$ to a real identity requires external information such as exchange KYC data, onboarding records, or a disclosed binding attestation.}

\subsection{Metadata-Conditioned Enumeration Resistance (G1)}

\begin{definition}[Observer View]
Let $\mathcal{O}$ be an observer with read access to all on-chain state and knowledge of a target email $e_B$.
The view of $\mathcal{O}$ is
\[
    \mathsf{View}_{\mathcal{O}} = \{(\texttt{intentId}_i, \alpha_i, \rho_i, a_i, v_i, e_i, t_i, \text{logs}_i)\}_i,
\]
where each tuple is publicly observable on-chain.
\end{definition}

\begin{definition}[G1 Metadata-Conditioned Challenge]
Fix a public metadata tuple
\[
    m = (c, a, v, e, t, \texttt{status})
\]
and a target identifier $e_B$.
The challenger samples a bit $b \xleftarrow{\$} \{0,1\}$, a fresh \texttt{intentId}$^\star \xleftarrow{\$} \{0,1\}^{\lambda}$, and computes
\[
    \alpha^\star = \mathsf{DeriveAddress}(c, \texttt{intentId}^\star).
\]
If $b = 0$, it uses the target recipient's hidden handle $u_{B,e}$; if $b = 1$, it uses an independently sampled comparison recipient handle $u_{B',e}$ with the same public metadata tuple $m$.
In either case it sets
\[
    \rho^\star = H(\texttt{"hfipay:bind"} \| u \| \texttt{intentId}^\star)
\]
for the chosen hidden handle $u$, gives $(e_B, \texttt{intentId}^\star, \alpha^\star, \rho^\star, m)$ to adversary $\mathcal{O}$, and receives a guess $b'$.
The enumeration advantage is
\[
    \mathsf{Adv}^{\mathsf{enum}}_{\HFIPay{},\mathcal{O}}(\lambda)
    =
    \left| \Pr[b' = b] - \frac{1}{2} \right|.
\]
\end{definition}

\begin{proposition}[Metadata-Conditioned Enumeration Resistance]\label{prop:enum}
Assume that: (i) the relay's private directory is not compromised, (ii) the \DIDP{} satisfies identity-root recovery hardness (Section~\ref{sec:model:didp}), so no current or retained epoch handle $u_{B,e}$ is recoverable from public data, and (iii) $H$ is instantiated as a domain-separated hash with $\kappa$-bit output and each \texttt{intentId} is sampled uniformly from $\{0,1\}^{\lambda}$.
Then for every polynomial-time observer $\mathcal{O}$ in the G1 experiment above,
\[
    \mathsf{Adv}^{\mathsf{enum}}_{\HFIPay{},\mathcal{O}}(\lambda)
\]
is negligible in $\min(\lambda,\kappa)$ beyond leakage already carried by the fixed public metadata tuple $m$.
\end{proposition}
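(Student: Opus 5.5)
The plan is a short hybrid argument. The view in the two worlds differs only in which hidden handle $u\in\{u_{B,e},u_{B',e}\}$ is fed into $\rho^\star=H(\texttt{"hfipay:bind"}\|u\|\texttt{intentId}^\star)$. Everything else given to $\mathcal{O}$ is distributed identically for $b=0$ and $b=1$: the identifier $e_B$, the fixed metadata $m$, the fresh $\texttt{intentId}^\star$, and $\alpha^\star=\mathsf{DeriveAddress}(c,\texttt{intentId}^\star)$. The last holds because $\alpha^\star$ is a public function of $\texttt{intentId}^\star$ alone. So it suffices to show that $\rho^\star$ is computationally indistinguishable from a uniform $\kappa$-bit string in each world, conditioned on the rest of the view.

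I would use three games. $\mathsf{G}_0$ is the real experiment with $b=0$. $\mathsf{G}_1$ replaces $\rho^\star$ by a uniform $r\xleftarrow{\$}\{0,1\}^\kappa$. $\mathsf{G}_2$ is the real experiment with $b=1$. $\mathsf{G}_1$ is independent of $b$, so the enumeration advantage is bounded by the sum of the two distinguishing gaps.

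To bound $|\Pr[\mathsf{G}_0]-\Pr[\mathsf{G}_1]|$, I would model $H$ as a random oracle with domain separation and argue as follows. $\rho^\star$ differs from uniform only if $\mathcal{O}$ queries $H$ on the exact point $\texttt{"hfipay:bind"}\|u_{B,e}\|\texttt{intentId}^\star$. Assumption (i) means the directory gives no help. By assumption (ii), via context isolation and identity-root recovery hardness, no $u_{B,e}$ is recoverable from public data.

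To make this formal, I would build a reduction $\mathcal{B}$ against $\mathsf{Game}^{\mathsf{rec\text{-}didp}}$. $\mathcal{B}$ simulates $\mathcal{O}$, answering public-data queries through $\mathcal{O}_{\mathsf{pub}}$ and $\mathcal{O}_{\mathsf{derive}}$ (excluding the bind context). It watches $\mathcal{O}$'s random-oracle queries of the bind form and extracts candidate handles. It can check a candidate against $K_{B,e}$-style values or simply guess among the at most $q_H$ queries. This bounds the hit probability by roughly $q_H\cdot\mathsf{Adv}^{\mathsf{rec\text{-}didp}}$. There is also an additive term of order $q_H 2^{-\lambda}$, covering queries made with the right $\texttt{intentId}^\star$ before it was sampled, and, if a precomputation is considered, collisions of order $q_H 2^{-\kappa}$. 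The gap $|\Pr[\mathsf{G}_1]-\Pr[\mathsf{G}_2]|$ is symmetric, using that $u_{B',e}$ is independently sampled.

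The main obstacle is the reduction step. Recovery hardness is stated for the root $\mathsf{REV}_B$, but what we need is unpredictability of the handle $u_{B,e}$ as a hash input. I would bridge this with the remark after the \DIDP{} definition ("no epoch handle is invertible to the root from public data") together with context isolation. Concretely, I would treat $u_{B,e}=H(\mathsf{Derive}(\cdot))$ as a fresh random-oracle output on a point the adversary cannot query without the root. If that is too weak, I would state explicitly that the argument uses handle unpredictability, the form in which assumption (ii) is phrased in the proposition itself.

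Finally, I would sum the terms to obtain a bound of
\[
2q_H\bigl(\mathsf{Adv}^{\mathsf{rec\text{-}didp}}+2^{-\lambda}+2^{-\kappa}\bigr),
\]
which is negligible in $\min(\lambda,\kappa)$. Any residual advantage comes only from $m$, which is fixed identically in both worlds.
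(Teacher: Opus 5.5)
Your hybrid skeleton is sound and is a sharper version of the paper's own argument. The paper's sketch lists the public tuple and then simply asserts that without $u_{B,e}$ an observer cannot test whether the tuple belongs to $B$. Your games $\mathsf{G}_0,\mathsf{G}_1,\mathsf{G}_2$ make that step precise. The only $b$-dependent component is $\rho^\star$, and in the random-oracle model it is uniform unless $\mathcal{O}$ queries the point $\texttt{"hfipay:bind"}\|u\|\texttt{intentId}^\star$.

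The step that fails is your reduction $\mathcal{B}$ to $\mathsf{Game}^{\mathsf{rec\text{-}didp}}$. That game is won only by outputting $\mathsf{REV}_B$. A correct candidate handle harvested from $\mathcal{O}$'s bind-form queries gives $\mathcal{B}$ no way to produce the root, short of inverting $H\circ\mathsf{Derive}$. More decisively, the game hands out handles itself: $\mathcal{O}_{\mathsf{derive}}$ on $\mathsf{Ctx}_{\mathsf{bind}}\|e$ returns $H(\mathsf{Derive}(\mathsf{REV}_B,\mathsf{Ctx}_{\mathsf{bind}}\|e))=u_{B,e}$. So root-recovery hardness, as formalized, is consistent with an adversary that knows every epoch handle. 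No reduction can derive handle unpredictability from it, and the term $q_H\cdot\mathsf{Adv}^{\mathsf{rec\text{-}didp}}$ in your final bound is unsupported. Excluding the bind context from the simulation does not help: the obstruction is what $\mathcal{B}$ must output, not what $\mathcal{O}$ receives.

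The repair is the fallback you already name, and it is what the paper's sketch actually uses. Take the clause of (ii) that no current or retained $u_{B,e}$ is recoverable from public data as the hypothesis, rather than as a consequence of $\mathsf{Game}^{\mathsf{rec\text{-}didp}}$; the paper's own ``so'' in (ii) has the same gap. The bad event in each world is then at most $q_H\cdot\epsilon_{\mathsf{pred}}$, where $\epsilon_{\mathsf{pred}}$ is the handle-prediction advantage. In the G1 view as defined, $\rho^\star$ is the only value $\mathcal{O}$ sees that depends on $u$. There the bound is roughly $q_H 2^{-\kappa}$ plus the chance of guessing the derived key $\mathsf{Derive}(\mathsf{REV}_B,\mathsf{Ctx}_{\mathsf{bind}}\|e)$, provided that key has high min-entropy. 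If you want the bound in terms of root recovery, you need a modeling step outside the \DIDP{} interface. For example, model $\mathsf{Derive}$ itself as a random oracle in $\mathsf{REV}_B$. Then $\mathcal{B}$ can extract the root from the query $\mathcal{O}$ must make to learn $u_{B,e}$.
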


\begin{proof}[Proof sketch]
The public chain reveals only the random intent identifier, the address derived from it, the public asset and amount tuple, the coarse epoch label, and the blinded binding $\rho_i = H(\texttt{"hfipay:bind"} \| u_{B,e} \| \texttt{intentId}_i)$.
Knowledge of $e_B$ does not reveal any live or retained epoch handle, because the mapping from normalized identifiers to $(\IDcom_B, u_{B,e}, K_{B,e})$ is private to the relay directory and the sender-verifiable quote remains off-chain.
Without $u_{B,e}$, an observer cannot test whether a public tuple belongs to $B$ except by compromising the relay directory or recovering the recipient's deterministic identity root.
Thus the on-chain tuple is computationally independent of $e_B$ up to leakage carried separately by public metadata such as asset type, amount, coarse epoch label, and timing.
\end{proof}

Compare with the na\"ive scheme where $\alpha = \mathsf{DeriveAddress}(c, \keccak(e_B))$: any party knowing $e_B$ would then compute the recipient address directly.

\subsection{Metadata-Conditioned Unlinkability (G2)}

\begin{definition}[Pre-Claim View]
Let $\mathsf{View}_{\mathcal{O}}^{\mathsf{pre}}$ denote the observer view restricted to funded but unclaimed intents.
Equivalently, $\mathsf{View}_{\mathcal{O}}^{\mathsf{pre}}$ excludes any claim-time disclosures such as destination addresses, identity commitments revealed by the verifier, or post-claim events.
\end{definition}

\begin{definition}[G2 Metadata-Conditioned Challenge]
Fix two public metadata tuples $m_1$ and $m_2$ drawn from a deployment's explicitly public metadata distribution.
The challenger samples a bit $b \xleftarrow{\$} \{0,1\}$.
If $b = 0$, it generates two funded pre-claim tuples for the same recipient identifier $e_B$ under hidden handles consistent with $m_1$ and $m_2$.
If $b = 1$, it generates the first tuple for $e_B$ and the second tuple for an independent recipient identifier $e_{B'}$, again matching the same public metadata tuples $m_1$ and $m_2$.
In both cases all \texttt{intentId} values are sampled independently from $\{0,1\}^{\lambda}$ and the challenger gives the resulting pre-claim tuples to adversary $\mathcal{O}$, who outputs a guess $b'$.
The unlinkability advantage is
\[
    \mathsf{Adv}^{\mathsf{unlink}}_{\HFIPay{},\mathcal{O}}(\lambda)
    =
    \left| \Pr[b' = b] - \frac{1}{2} \right|.
\]
\end{definition}

\begin{proposition}[Metadata-Conditioned Pre-Claim Address Unlinkability]\label{prop:unlink}
Under the same assumptions as Proposition~\ref{prop:enum}, every polynomial-time observer $\mathcal{O}$ in the G2 experiment above has
\[
    \mathsf{Adv}^{\mathsf{unlink}}_{\HFIPay{},\mathcal{O}}(\lambda)
\]
negligible in $\min(\lambda,\kappa)$ once the compared executions are conditioned on the same explicitly public metadata tuples $(m_1, m_2)$.
\end{proposition}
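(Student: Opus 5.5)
The approach is a hybrid argument that reduces G2 to the same hiding property of $\rho$ that underlies Proposition~\ref{prop:enum}. The two cases differ only in the hidden handle used for the second tuple: $u_{B,e_2}$ when $b=0$, and $u_{B',e_2}$ when $b=1$. The identifiers $\texttt{intentId}_1,\texttt{intentId}_2$ are fresh and independent, each $\alpha_j=\mathsf{DeriveAddress}(c_j,\texttt{intentId}_j)$ depends only on public randomness, and the metadata $(m_1,m_2)$ is fixed identically in both worlds. So the only random variables whose joint distribution can differ are $(\rho_1,\rho_2)$. I would write the view as $(\texttt{intentId}_1,\alpha_1,\rho_1,m_1,\texttt{intentId}_2,\alpha_2,\rho_2,m_2)$ and handle the addresses and metadata by a simulation argument: given the $\rho$'s, the reduction can produce everything else itself.

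The hybrids are as follows.
\begin{itemize}
\item $\mathsf{H}_0$ is the $b=0$ world.
\item $\mathsf{H}_1$ replaces $\rho_2$ by $H(\texttt{"hfipay:bind"}\|u^\ast\|\texttt{intentId}_2)$ for a freshly sampled independent handle $u^\ast$. This is exactly the $b=1$ world, since an independent recipient $e_{B'}$ contributes an independent handle by context isolation of the \DIDP{}.
\end{itemize}
Both hybrids can then be compared against an intermediate hybrid $\mathsf{H}_{1/2}$ in which $\rho_1,\rho_2$ are replaced by independent uniform $\kappa$-bit strings. To go from $\mathsf{H}_0$ to $\mathsf{H}_{1/2}$, I use that $H$ is modeled as a random oracle (or a PRF keyed by the handle, matching how the proof sketch of Proposition~\ref{prop:enum} treats $H$). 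The queries $(\texttt{"hfipay:bind"},u,\texttt{intentId}_j)$ are distinct except with probability $2^{-\lambda}$, because the $\texttt{intentId}$'s collide only with that probability. Both outputs are therefore fresh uniform unless the observer queries $H$ on the secret handle. The case $e_1=e_2$ also matters: there both tuples share the same $u_{B,e}$, and the distinctness of the $\texttt{intentId}$ inputs is exactly what prevents $\rho_1=\rho_2$-style tagging.

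The main obstacle is bounding the probability that the observer queries $H$ on $u_{B,e}$, which requires tying that event to assumption (ii). I would build a reduction to $\mathsf{Game}^{\mathsf{rec\text{-}didp}}$. The reduction simulates the random oracle lazily and embeds the challenge handle via $\mathcal{O}_{\mathsf{derive}}$ on $\mathsf{Ctx}_{\mathsf{bind}}\|e$. Any observer query containing a candidate $u$ that matches the handle is then a handle recovery, which the paper states is equivalent to the recovery hardness in (ii). Assumption (i) ensures that no directory or quote transcript leaks $u$ or $K_{B,e}$ into the view; otherwise $K_{B,e}$ would give an offline test. This yields a bound of the form $\mathsf{Adv}^{\mathsf{unlink}}\le 2\,\mathsf{Adv}^{\mathsf{rec\text{-}didp}}+q_H2^{-\kappa}+2^{-\lambda}$ per hybrid step, which is negligible in $\min(\lambda,\kappa)$.

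Finally, I would note that metadata leakage is excluded by construction: $m_1,m_2$ are identical across worlds, so any residual distinguishing power comes from the metadata distribution itself and not from protocol-generated values. This matches the conditioning in the statement.
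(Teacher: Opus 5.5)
Your hybrid skeleton is sound and is essentially a sharpened version of the paper's sketch. The paper argues only that, for an observer without $u_{B,e}$, each $(\alpha_i,\rho_i)$ built from a fresh \texttt{intentId} looks like an independent one-time tuple. Your handling of distinct random-oracle inputs (including the $e_1=e_2$ case) and your simulation of $\alpha_j$ and $m_j$ are fine.

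\textbf{The step that fails is bounding the bad event by $\mathsf{Adv}^{\mathsf{rec\text{-}didp}}$.} In $\mathsf{Game}^{\mathsf{rec\text{-}didp}}$, the oracle $\mathcal{O}_{\mathsf{derive}}$ on $\mathsf{Ctx}_{\mathsf{bind}}\|e$ returns $H(\mathsf{Derive}(\mathsf{REV}_B,\mathsf{Ctx}_{\mathsf{bind}}\|e))=u_{B,e}$ itself. So when your reduction ``embeds'' the handle, it already holds it in the clear. An observer query that hits $u_{B,e}$ gives the reduction nothing new, and it must still output $\mathsf{REV}_B$, which the handle does not let it compute.

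The paper's ``equivalently'' sentence says handles cannot be inverted to the root. You need the converse direction: that the handle itself is unpredictable from public data. The formal game does not give this. Concretely, take a $\mathsf{Derive}$ whose binding-context output depends on only $8$ bits of $\mathsf{REV}_B$. Root recovery stays hard, yet a G2 observer can try all $2^8$ candidate handles and test whether one opens both $\rho_1$ and $\rho_2$. This always succeeds when $b=0$ and $e_1=e_2$, and succeeds only with probability about $2^{-8}$ when $b=1$. So your inequality $\mathsf{Adv}^{\mathsf{unlink}}\le 2\,\mathsf{Adv}^{\mathsf{rec\text{-}didp}}+\cdots$ is not valid.

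\textbf{The repair is to drop the reduction and argue directly.} $\mathsf{H}_0$ and $\mathsf{H}_{1/2}$ are identical until one of two events occurs: the \texttt{intentId} values collide, or the observer queries $H$ on an input $\texttt{"hfipay:bind"}\|u\|\cdot$ with $u$ one of the live handles $u_1,u_2$. In $\mathsf{H}_{1/2}$ the view is independent of the handles. The second event therefore has probability at most $q_H\bigl(2^{-H_\infty(u_1)}+2^{-H_\infty(u_2)}\bigr)$.

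This needs an explicit handle-unpredictability assumption. For example, if $\mathsf{Derive}(\cdot,\mathsf{Ctx})$ is injective on the uniform $\lambda$-bit root, then $u$ has min-entropy about $\min(\lambda,\kappa)$ in the random-oracle model. In your PRF variant, you need instead that $u_{B,e}$ is a pseudorandom key.

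This is exactly what the paper's sketch takes for granted when it reasons about ``an observer without $u_{B,e}$''. It relies on the informal gloss of assumption (ii), that no epoch handle is recoverable from public data. You should invoke that gloss as the assumption you use, rather than claim it follows from the formal recovery game.

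A minor point on attribution: $u_{B',e}$ is independent of $u_{B,e}$ because $B'$ has an independently sampled root, not because of context isolation. In the random-oracle argument you never need independence between handles at all, only that each handle is unpredictable.
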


\begin{proof}[Proof sketch]
Each intent uses an independently sampled \texttt{intentId}.
The public address $\alpha_i$ is derived only from that random value, and the blinded binding $\rho_i$ is a hash of the hidden epoch handle $u_{B,e}$ combined with the fresh \texttt{intentId}.
For an observer without $u_{B,e}$, the pair $(\alpha_i, \rho_i)$ is distributed like an independently generated one-time tuple subject only to the matched public metadata.
Hence there is no publicly visible structure that distinguishes ``same underlying identifier twice'' from ``two different identifiers once each'' before claim.
\end{proof}

\begin{proposition}[Post-Claim Leakage Bound]
After claim, repeated use of the same public destination $\beta$ and repeated revelation of the same public commitment $\IDcom_B$ can reveal common control at the blockchain-identity layer.
If the verifier interface exposes $\IDcom_B$, then all claims using that same public commitment are immediately linkable even if their pre-claim intents were not.
Deployments that care about post-claim privacy SHOULD therefore either hide $\IDcom_B$ behind proof-system-specific verification interfaces or refresh user-facing public commitments on a schedule coordinated with enrollment refreshes.
\end{proposition}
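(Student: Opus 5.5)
The Post-Claim Leakage Bound is a statement about what the public claim transcript \emph{does} reveal, so it needs a linking procedure rather than a reduction. I will exhibit an explicit, efficient observer strategy that links claims, and then argue that the strategy succeeds with overwhelming probability. Fix the post-claim view: every intent in state \texttt{CLAIMED} carries the public arguments of its claim transaction $\texttt{claim}(\texttt{intentId}_i, \IDcom_B, a_i, v_i, \beta_i, T_{\mathsf{exp}}, n_i, \pi_i)$, as listed in Section~\ref{sec:claim} and Table~\ref{tab:visibility}. The observer partitions claimed intents by equality of the disclosed fields. The first partition is by $\beta_i$, and the second, whenever the verifier interface exposes it, is by $\IDcom_B$.

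\textbf{The $\IDcom_B$ partition.} The plan is to establish soundness of this partition, meaning equal values of $\IDcom_B$ imply the same recipient family except with negligible probability. Let $i \neq j$ be two accepted claims with the same public $\IDcom$. By knowledge soundness of the claim proof system (Section~\ref{sec:model:didp}), I extract roots $\mathsf{REV}, \mathsf{REV}'$ with openings $s, s'$ such that
\[
\IDcom = H(\mathsf{REV}\|s\|\mathsf{domain}_{\mathsf{id}}) = H(\mathsf{REV}'\|s'\|\mathsf{domain}_{\mathsf{id}}).
\]
If $\mathsf{REV} \neq \mathsf{REV}'$, these two preimages form a collision for $H$. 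So, under collision resistance, both claims were authorized by the same deterministic identity root. This is the ``common control'' the statement asserts. The argument is independent of whether the pre-claim tuples $(\alpha_i,\rho_i)$ were unlinkable under Proposition~\ref{prop:unlink}, because the linkage comes entirely from claim-time disclosures. Those disclosures are exactly what $\mathsf{View}^{\mathsf{pre}}_{\mathcal{O}}$ excludes, so there is no contradiction with G2.

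\textbf{The $\beta$ partition.} The argument here is weaker and purely semantic. By front-running resistance, the accepted proof $\pi_i$ binds $\beta_i$ into $m_i$, so $\beta_i$ is the destination chosen by the authorizing identity. Repeated use of the same $\beta$ therefore reveals common control of that address at the blockchain layer. The statement claims nothing stronger, so it suffices to observe this.

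\textbf{Remedies in the SHOULD clause.} I will justify these by the converse direction. If the verifier hides $\IDcom_B$, the zero-knowledge property of the claim proof means the transcript reveals nothing beyond the public inputs, so only the $\beta$ partition remains. Refreshing commitments with fresh $s_B$ yields distinct $\IDcom$ values. By hiding of $H$ under fresh openings, in the same style as the $\rho$ argument for Proposition~\ref{prop:enum}, those values are computationally unlinkable.

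\textbf{Expected obstacle.} The main difficulty is making the extraction step rigorous. Knowledge soundness applied to two separate proofs requires a joint extractor, or else extracting sequentially and handling the failure probabilities through a union bound. I would first try treating the claims individually and union-bounding over the polynomially many claims the observer sees.
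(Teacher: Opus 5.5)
Your proposal is correct, and it does more than the paper does. The paper states this proposition with no proof. It treats the first two sentences as immediate: the exposed $\IDcom_B$ and $\beta_i$ are public claim-time fields, so equal values link claims by inspection, and since they lie outside $\mathsf{View}^{\mathsf{pre}}_{\mathcal{O}}$, G2 is untouched. It treats the third sentence as a normative recommendation, not something to be proved. Your equality-partition observer is exactly that implicit reasoning.

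What you add is a soundness argument for the $\IDcom$ partition, using knowledge soundness plus collision resistance of $H$ on $\mathsf{REV}\|s\|\mathsf{domain}_{\mathsf{id}}$. This shows that a shared commitment certifies a common root even when some claims are produced adversarially. The paper's implicit version relies only on determinism of $\IDcom_B$ for honest recipients. The obstacle you anticipate is milder than you expect. For honestly generated claims no extractor is needed, because distinct uniformly sampled roots give distinct commitments except with negligible probability. For adversarial claims, per-proof extraction with a union bound over polynomially many claims suffices.

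Your justification of the SHOULD clause goes beyond the statement and relies on hypotheses you have not stated.
\begin{itemize}
\item Zero-knowledge hides only the witness, and in $\Phi_{\mathsf{claim}}$ as written $\IDcom_B$ is a public input. Hiding it therefore means moving it into the witness, for example by turning condition (a) into a registry-membership proof. This does not endanger claim correctness, since Lemma~\ref{lem:quote-claim} ties the claimant to the quote through $\rho_i$.
\item Your conclusion that only the $\beta$ partition remains also requires that $n_i$, and any replay tag under \ZKACE{}'s rules, not be identity-scoped.
\item Unlinkability of refreshed commitments requires pseudorandomness of $s \mapsto H(\mathsf{REV}_B\|s\|\mathsf{domain}_{\mathsf{id}})$. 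That is strictly stronger than the recovery-hardness assumption the paper actually states. Appealing to the style of Proposition~\ref{prop:enum} inherits that sketch's implicit leap rather than closing it.
\end{itemize}
None of this affects the proposition's actual assertions.
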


\begin{proposition}[Cross-Sender Linkability in Verified-Quote Mode]
In the verified-quote deployment, all senders who receive quotes for the same recipient within the same binding epoch observe the same binding-key commitment $K_{B,e}$ in their respective quotes.
Colluding senders can therefore determine that they are paying the same recipient during that epoch, even before claim.
This cross-sender linkability is inherent to the epoch-scoped binding model and does not affect on-chain observer privacy (G1, G2), because $K_{B,e}$ is never committed on-chain.
Shorter binding epochs reduce the collusion window at the cost of more frequent attestation refreshes.
\end{proposition}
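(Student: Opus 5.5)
The statement to prove is the last one above, the Cross-Sender Linkability proposition. It has a positive part and a negative part, so the plan splits along those lines.

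\emph{Positive part: colluding senders can link.} I would proceed by direct inspection of the quote format in Section~\ref{sec:intent}. The quote $q_i$ contains $K_{B,e_i}$ in the clear, as a public input of $\Phi_{\mathsf{quote}}$. By the recipient definition, $K_{B,e} = H(\texttt{"hfipay:bind-key"} \| u_{B,e})$, and $u_{B,e}$ depends only on $(\mathsf{REV}_B, e)$. By the \DIDP{} determinism property, every quote the relay issues for $B$ within epoch $e$ therefore carries the identical value $K_{B,e}$, independent of \texttt{intentId} and of the sender. The attestation $\tau_B$ is likewise fixed per enrollment and is shown verbatim to every sender.

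The colluding senders' distinguisher is then trivial: compare the $K$ fields, or the $\tau_B$ fields, of their transcripts. This decides ``same recipient'' with advantage close to $1$. For the converse direction (different recipients give different values), I would note that $K_{B,e} = K_{B',e}$ for $B \neq B'$ would force $u_{B,e} = u_{B',e}$ or an $H$-collision. The first event is negligible by context isolation and root sampling; the second is excluded by collision resistance. So equality of $K$ is a reliable, essentially error-free link.

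\emph{Negative part: G1 and G2 are unaffected.} I would argue that Propositions~\ref{prop:enum} and~\ref{prop:unlink} concern only the observer view $\mathsf{View}_{\mathcal{O}}$. By Table~\ref{tab:visibility} and the registered on-chain tuple $(\texttt{intentId}, \rho, a, v, e, T_{\mathsf{exp}}, \gamma_A, h^{\mathsf{ref}})$, that view contains neither $K_{B,e}$ nor $\tau_B$. The quote transcript stays off-chain, and observers have no access to sender-side data. The G1 and G2 games, and their hypotheses, are therefore literally unchanged, and those propositions apply as stated.

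I would add one remark. Even an observer who somehow learned $K_{B,e}$ would still need $u_{B,e}$ to test $\rho_i$ against it. By preimage resistance of $H$ and recovery hardness, $K_{B,e}$ does not yield $u_{B,e}$. This is reassurance beyond the claim itself, since the claim only asserts the off-chain status of $K$.

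\emph{Epoch length and where the care is needed.} The final sentence follows because the linkable window is exactly the set of quotes sharing an epoch label. Under context isolation, $K_{B,e}$ and $K_{B,e'}$ for $e \neq e'$ are computationally independent, so shorter epochs partition the transcripts into smaller linkable classes. The cost is one fresh $\tau_B$ per epoch.

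The main obstacle is stating ``inherent'' precisely. I would phrase it as: any verified-quote scheme whose quote exposes a per-epoch, sender-independent public value bound by $\tau_B$ has this property. I would not attempt a lower bound over all possible designs, and would note that rerandomizable commitments would be needed to avoid it.
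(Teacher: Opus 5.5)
Your proposal is correct and follows the same reasoning the paper gives inline in the statement itself (the paper has no separate proof). That reasoning is: $K_{B,e}$ depends only on $(\mathsf{REV}_B, e)$ and is carried in every quote, it is absent from the registered on-chain tuple so the G1/G2 games are unchanged, and context isolation across epoch labels is what makes rotation shrink the linkable window. Your additions (the no-false-positive direction and the remark about an observer who learns $K_{B,e}$) go beyond what the paper claims; for the latter, note that preimage resistance alone does not imply $K_{B,e}$ gives no help in testing $\rho_i$, and that step needs a random-oracle or pseudorandomness-type assumption on $H$.
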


\subsection{Relay Compromise}\label{sec:relay-compromise}

If $\mathcal{R}$'s private database is compromised, the adversary learns the directory mappings
\[
    \mathsf{Norm}(e_B) \mapsto (\IDcom_B, \{u_{B,e}, K_{B,e}, e\}, \texttt{metadata})
\]
and the identifier-to-intent mappings.
This lets the adversary reconstruct the email-to-intent graph and, if public claim events exist, join it to claim destinations and identity commitments.
More seriously, for every retained epoch handle $u_{B,e}$ the adversary can recompute
\[
    H(\texttt{"hfipay:bind"} \| u_{B,e} \| \texttt{intentId}_i)
\]
against public \texttt{intentId} values in the covered time window, retrospectively deanonymizing all intents linked to that epoch.
The blast radius is therefore ``all retained history for the compromised epochs,'' not merely the currently pending graph.

This is similar to existing payment infrastructure with respect to privacy metadata: PayPal, Stripe, and traditional banks all maintain private sender-recipient mappings that would expose transaction graphs if breached.
In addition, \HFIPay{} inherits an application-layer availability dependency because $\mathcal{R}$ performs notification and relaying.
By contrast, successful compromise of $\mathcal{R}$ still does not grant unilateral claim authority once an intent has committed its blinded binding on-chain, because claims must still satisfy the \ZKACE{} relation for the corresponding deterministic identity.

Mitigations include:
\begin{itemize}
    \item Encrypting the mapping database at rest with hardware-backed keys.
    \item Rotating binding epochs and deleting settled epoch handles after a bounded retention period.
    \item Purging claimed intent records and stale quote transcripts after a retention period.
    \item Supporting user-visible audit logs for notification, claim, and disclosure actions.
    \item Using attested enrollment records and sender-verifiable quotes in the verified-quote mode so that directory population and quote generation are auditable.
    \item Distributing the relay service or directory across multiple independent operators (future work).
\end{itemize}

Even in this worst case, the public chain itself remains unchanged: arbitrary third parties still see only pseudonymous intent data unless they also obtain the compromised private directory.

\section{Layered Trust Model}\label{sec:trust}

A recurring concern about identifier-based crypto payments is that privacy may come either at the cost of recoverability or at the cost of lawful accountability.
\HFIPay{} addresses this by separating concerns across three layers and by making the enrollment mode explicit.

\subsection{Three-Layer Architecture}

\begin{table}[H]
\centering
\caption{Layered trust architecture.}
\label{tab:layers}
\begin{tabularx}{\textwidth}{>{\raggedright\arraybackslash}p{2.7cm}>{\raggedright\arraybackslash}X>{\raggedright\arraybackslash}X}
\toprule
\textbf{Layer} & \textbf{Privacy Property} & \textbf{Accountability Mechanism} \\
\midrule
Protocol (on-chain) & Random intents, blinded bindings, quoted asset tuples, and \ZKACE{} claim proofs; no identifier appears on-chain before claim & Verifier logic, refund rules, exact-amount release, and public state transitions \\
\addlinespace
Application / Directory ($\mathcal{R}$) & Private mapping from normalized identifiers to $(\IDcom, u_e, K_e)$ and private quote/intention records & Legal process $\to$ $\mathcal{R}$ $\to$ logs, quote records, and directory state \\
\addlinespace
Identity / Binding ($\mathcal{I}$) & Email or OIDC providers authenticate identifier control; verified-quote mode adds attestations over binding-key commitments & Legal process $\to$ provider or attester $\to$ enrollment records \\
\bottomrule
\end{tabularx}
\end{table}

The trust split differs slightly by deployment mode:
\begin{itemize}
    \item \textbf{Baseline deployment.} The application is trusted both for the one-time binding $\mathsf{Norm}(e_B) \mapsto (\IDcom_B, u_{B,e})$ and for correct per-intent recipient binding before funding.
    \item \textbf{Verified-quote deployment.} The binding layer attests to $(\mathsf{Norm}(e_B), K_{B,e}, e)$ and the sender verifies a quote proof before funding, removing the relay's unilateral recipient-substitution capability.
\end{itemize}

\subsection{Investigation Path}

For a specific investigation, the path is:
\begin{enumerate}
    \item Obtain a target identifier from off-chain sources.
    \item Legal process $\to$ email/OIDC provider or attester $\to$ enrollment identity evidence.
    \item Legal process $\to$ relay/directory $\to$ intent records associated with that identifier.
    \item Intent records $\to$ on-chain intent states and claim events $\to$ fund flows.
\end{enumerate}

This yields a clearer audit path than ordinary pseudonymous DeFi, while still keeping the identifier-to-intent mapping off the public chain.

\subsection{Comparison with Mixing Protocols}

\begin{table}[H]
\centering
\caption{Comparison with mixing/privacy protocols.}
\label{tab:comparison}
\begin{tabularx}{\textwidth}{lcc>{\raggedright\arraybackslash}X}
\toprule
\textbf{Property} & \textbf{Mixer} & \textbf{Standard DeFi} & \textbf{\HFIPay{}} \\
\midrule
Sender knows recipient? & No & By address & By email \\
Identity anchor & None & None & Email \\
Legal-process path & None & Difficult & Available through provider and relay records \\
On-chain privacy (public observer) & Full & None & Strong pre-claim; weaker after claim if the same public commitment or destination is reused \\
\bottomrule
\end{tabularx}
\end{table}

\section{Implementation Sketches}\label{sec:implementation}

We outline compatible realizations on Ethereum-like and Solana-like environments.
The purpose of this section is not to claim a single definitive implementation, but to show that the protocol requires only standard chain capabilities: deterministic addresses or state accounts, storage for intent metadata, and a verifier for the chosen \ZKACE{} proof system.
We present implementation compatibility and state layout here, and report measured claim-proof and on-chain gas costs from the reference implementation in Section~\ref{sec:benchmarks}.
End-to-end relay latency and verified-quote proof sizes are not yet benchmarked and remain future work.

\subsection{EVM: Factory, Intent Registry, and Deterministic Deposits}

An EVM-compatible realization can separate concerns into a factory/registry contract and per-intent deposit accounts.

\paragraph{HFIPayFactory.}
The factory serves as both deterministic address oracle and intent registry.
It stores an \texttt{IntentMeta} record keyed by \texttt{intentId}, containing at least the blinded binding $\rho$, the quoted asset descriptor $a$, the quoted amount $v$, the epoch label $e$, the expiry, refund metadata, and the lifecycle status.
It computes deposit addresses via CREATE2 and can deploy EIP-1167-style minimal proxy clones when a claim or refund is executed.
Key entry points are:

\begin{itemize}
    \item \texttt{registerIntent(...)}: records the blinded binding, quoted asset, quoted amount, epoch label, expiry, and refund data before funding.
    \item \texttt{computeAddress(intentId)}: predicts the deposit address off-chain, enabling immediate funding after registration.
    \item \texttt{claimWithProof(...)}: checks the stored blinded binding, quoted asset, epoch label, quoted amount, and the \ZKACE{} proof, then releases exactly the quoted asset and amount to the destination.
    \item \texttt{refund(...)}: after expiry, reconstructs the refund message from the stored intent fields, verifies the revealed $\sigma_A^{\mathsf{ref}}$ against $\gamma_A$'s authorization policy and the stored commitment $h^{\mathsf{ref}}$, and releases exactly the quoted asset and amount to $\gamma_A$.
\end{itemize}

\paragraph{Claim semantics.}
The key EVM engineering point is that the relay cannot change the intended claimant after funding, because \texttt{rho} and the full quoted asset tuple are recorded before the deposit is made.
In the verified-quote deployment, the sender has already checked off-chain that the quoted \texttt{rho} matches the attested recipient binding key, that the quoted address matches the deterministic derivation from \texttt{intentId}, and that the readable registry tuple matches the accepted quote.
The chain therefore verifies the claimant against a pre-committed blinded value and exact asset tuple, not against a relay-supplied identifier at claim time.

Gas cost depends on the chosen verifier and on whether the implementation releases funds directly or via a transient deployed account.
Those details are proof-system and contract-design dependent, so we treat them as implementation-specific rather than protocol-defining.

\subsection{Wallet-Native Resolver API}

A practical deployment can expose the one-time address rail through a resolver API:
\[
    \texttt{POST /api/resolver/hfi},
\]
where \texttt{hfi} denotes a human-friendly identifier.
The request contains the identifier kind, the identifier, the chain, and a coarse asset class; for the minimal EVM rail, the supported asset class is native ETH:
\[
    (\texttt{identifierKind},\texttt{identifier},c,\texttt{asset=native}).
\]
The response contains a one-time protected deposit address, a resolver reference, an expiration time, and display metadata such as ``claim and refund protection.''

Internally, the relay stores a resolver-intent record containing \texttt{resolverRef}, \texttt{intentId}, the normalized identifier in private form, the blinded binding or identifier hash, chain, deposit address, expiry, and lifecycle status.
When the indexer later observes funding to the address, it finalizes the record with \texttt{fundingSender}, \texttt{fundingAmount}, \texttt{fundingTx}, \texttt{claimBefore}, and \texttt{refundAfter}.
The initial reference implementation restricts this rail to native ETH on Base mainnet and returns addresses derived from a configured CREATE2 factory and initialization-code hash.

This API allows a wallet extension or Snap to implement:
\[
    \texttt{alice@example.com} \longrightarrow \texttt{0xOneTimeProtectedDepositAddress}.
\]
The wallet still displays and signs a normal transaction, while \HFIPay{} supplies the private routing, claim, and refund lifecycle behind the one-time address.
The companion application remains useful for Snap installation, claim, status, refund, manual recovery, and the higher-assurance verified-quote rail, but it is no longer required as the primary sender interface for small native wallet sends.

\subsection{Solana: PDA-Based Accounts}

On Solana, the same logic can be implemented with Program Derived Addresses, which replace contract deployment with stateful accounts owned by the program.

Each intent's state is stored in a PDA seeded with \texttt{[b"hfipay", intent\_id]} and contains $\rho$, quoted asset descriptor, quoted amount, epoch label, expiry, refund data, and a status flag.
For SPL token payments, the program additionally creates a PDA-owned token vault $\mathcal{V}_M$ for each supported mint $M$.
The claim instruction verifies the supplied \ZKACE{} proof against the stored $(\rho, a, e, v)$ tuple and then transfers exactly the quoted SOL or SPL asset amount to the claimed destination.
If the target verifier is too expensive to execute monolithically on Solana, a deployment can use a companion verifier program, recursive aggregation, or a proof system chosen for the Solana execution environment.

Unlike the EVM path, no additional contract deployment occurs at claim time; the state PDA and any vault are created during the funding phase.

\subsection{Relay Service}

The relay/directory layer is the sole off-chain component and the only party that holds identifier-to-intent mappings.
Architecturally, it is a computation layer backed by a private database:

\begin{itemize}
    \item \textbf{Directory table}: Maps $\mathsf{Norm}(e_B)$ to the current and unsettled historical binding epochs $(\IDcom_B, u_{B,e}, K_{B,e}, e, \texttt{mode}, \texttt{metadata})$, with metadata such as enrollment time, attestation reference, and rotation state.
    \item \textbf{Intent table}: Stores private intent records containing \texttt{intentId}, $\rho$, asset descriptor, amount, chain, epoch label, refund data, expiry, and status.
    \item \textbf{Intent API}: Accepts $(e_B, a, v, c, \gamma_A, T_{\mathsf{exp}})$, resolves the directory entry privately, computes the blinded binding, and in verified-quote mode returns the full quote tuple together with $(K_{B,e}, T_{\mathsf{bind}}, \tau_B, \pi_i^{\mathsf{quote}})$ and the deterministic deposit address.
    \item \textbf{Transaction relaying}: Submits claim and refund transactions on behalf of users, paying gas or rent so that neither sender nor recipient needs native tokens.
    \item \textbf{Notification}: Informs recipients of pending payments only after funding detection or after acceptance of a non-refundable relay fee, limiting notification spam.
    \item \textbf{Abuse controls}: Enforces sender authentication, rate limits, and quote expiration to reduce unfunded-intent spam.
\end{itemize}

Critically, the relay does not hold $\mathsf{REV}_B$, does not custody funds, and cannot unilaterally redirect claims after the blinded binding has been registered on-chain.
Its trust role is therefore narrower than that of a custodian but broader than that of a pure transport relay: in the baseline mode it is trusted for correct one-time directory enrollment and pre-funding recipient binding, while in the verified-quote mode the sender-verifiable quote removes the latter trust assumption.

\subsection{Complexity and Comparative Overhead}

At the protocol level, the extra overhead relative to a naive ``identifier $\to$ address'' mapping is easy to characterize (measured costs follow in Section~\ref{sec:benchmarks}).
The verified-quote deployment adds one off-chain quote proof generation and one sender-side verification per funded intent; both baseline and verified-quote deployments require one claim proof generation by the recipient and one on-chain proof verification at claim time.
The relay API path adds one directory lookup and one intent-registration step, while the sender path adds local address derivation and, in the verified-quote deployment, validation of the registered intent tuple.

\begin{table}[H]
\centering
\caption{Qualitative comparison with simpler identifier-routing approaches.}
\label{tab:complexity}
\begin{tabularx}{\textwidth}{>{\raggedright\arraybackslash}p{3.0cm}>{\raggedright\arraybackslash}X>{\raggedright\arraybackslash}X>{\raggedright\arraybackslash}X}
\toprule
\textbf{Scheme} & \textbf{Extra Off-Chain Work} & \textbf{On-Chain Work} & \textbf{Privacy / Trust Tradeoff} \\
\midrule
Naive direct mapping & None beyond identifier lookup & Ordinary transfer only & No privacy against anyone who knows the identifier \\
\addlinespace
\HFIPay{} baseline & Relay directory lookup; optional sender refund authorization; recipient claim proof generation & Intent registration plus one claim-proof verification & Preserves pre-claim observer privacy, but relay remains trusted for pre-funding recipient binding \\
\addlinespace
\HFIPay{} verified quote & All baseline costs plus one quote proof generation by the relay and one quote verification by the sender & Same on-chain path as baseline & Removes unilateral relay recipient substitution before funding, while keeping identifier resolution off-chain \\
\bottomrule
\end{tabularx}
\end{table}

The byte size of sender-visible quote material is $O(|\tau_B| + |\pi^{\mathsf{quote}}|)$, and the claim transaction size is $O(|\pi_i|)$.
End-to-end latency is dominated by directory lookup, quote-proof generation, and one on-chain registration/funding round before claim.
We report measured claim-proof and on-chain costs below; the quote-proof size and full end-to-end relay latency depend on deployment-specific choices and are left to a future artifact paper.

\subsection{Reference Implementation and Benchmarks}\label{sec:benchmarks}

We report measurements from the reference implementation. The claim relation $\Phi_{\mathsf{claim}}$ is realized in two interchangeable backends: a succinct Groth16/BN254 proof for direct on-chain verification, and a transparent, post-quantum-oriented Circle STARK proof (Stwo, Poseidon2 over Mersenne-31). Proving and verification are measured single-threaded on Apple Silicon in release builds over 30 iterations after 3 warmups (median), with the HFI backend linked against \ZKACE{} from \texttt{acechain-io/zk-ace} main at commit \texttt{e118ade5}; Groth16 timing uses the \texttt{zk-ace/dev} deterministic setup for benchmarking only. On-chain verifier gas is measured with Foundry. Table~\ref{tab:bench-claim} contrasts the two proof backends, while Table~\ref{tab:bench-gas} separately reports the surrounding EVM intent-operation costs for the attested-binding deployment.

\begin{table}[H]
\centering
\caption{\HFIPay{} claim proof, two backends. Off-chain prove/verify single-threaded on Apple Silicon (release); on-chain verify gas measured with Foundry against the deployed BN254 verifier.}
\label{tab:bench-claim}
\begin{tabular}{@{}lrr@{}}
\toprule
\textbf{Metric} & \textbf{Groth16 / BN254} & \textbf{Circle STARK} \\
\midrule
Prove time (median)        & 36.79\,ms  & 8.08\,ms \\
Verify time, off-chain     & 1.54\,ms   & 0.90\,ms \\
Proof size                 & \textbf{128\,B} & 86{,}213\,B (${\approx}84.2$\,KB) \\
On-chain verify gas        & \textbf{${\approx}241{,}000$} & needs aggregation \\
Deployment role            & direct per-tx & PQ-oriented, per-block agg. \\
\bottomrule
\end{tabular}
\end{table}

The two backends span a deliberate tradeoff. The \textbf{Groth16 instantiation demonstrates practical direct EVM verification in the measured implementation}: a constant $128$-byte proof verified for ${\approx}241\text{k}$ gas, comparable to other production ZK verifiers and to an ordinary complex token interaction. The Circle STARK instantiation is transparent and post-quantum-oriented but produces an ${\approx}84$\,KB proof, which is too large for direct per-transaction submission and is intended for per-block aggregation; its small prover and sub-millisecond off-chain verification make it well suited to an aggregation pipeline. Crucially, the protocol is proof-system-agnostic (Section~\ref{sec:model:didp}): the $84$\,KB transparent proof is a property of one backend, not of \HFIPay{}. The STARK circuit compiles to the compact AIR summarized in Table~\ref{tab:bench-circuit}. The EVM operation gas in Table~\ref{tab:bench-gas} is a separate measurement for the attested-binding deployment's surrounding intent operations; its \texttt{claim} row does not include on-chain proof verification, so deployments that verify Groth16 directly on the claim path should add the verifier cost from Table~\ref{tab:bench-claim}.

\begin{table}[H]
\centering
\caption{\HFIPay{} claim circuit metrics (Circle STARK).}
\label{tab:bench-circuit}
\begin{tabular}{@{}lr@{}}
\toprule
\textbf{Metric} & \textbf{Value} \\
\midrule
Poseidon2 permutations     & 7 \\
Active rows                & 217 \\
Trace length               & 256 ($\log = 8$) \\
Main trace columns         & 73 \\
Preprocessed columns       & 30 \\
AIR constraint expressions & 34 \\
\bottomrule
\end{tabular}
\end{table}

\begin{table}[H]
\centering
\caption{On-chain gas for the EVM attested-binding deployment (Foundry gas reporter). The \texttt{claim} path routes the quoted amount to the active binding minus fee; on-chain ZK-verifier gas depends on the chosen proof system and is not included here.}
\label{tab:bench-gas}
\begin{tabular}{@{}lrr@{}}
\toprule
\textbf{Operation} & \textbf{Avg gas} & \textbf{Median gas} \\
\midrule
\texttt{bind} (epoch binding)        & 56{,}919  & 59{,}428 \\
\texttt{claim} (route to binding)    & 99{,}792  & 115{,}410 \\
\texttt{refund}                      & 52{,}317  & 64{,}351 \\
\texttt{deposit} (native)            & 76{,}920  & 77{,}202 \\
\texttt{deposit} (ERC-20)            & 167{,}755 & 182{,}796 \\
\bottomrule
\end{tabular}
\end{table}

Taken together, these measurements give a balanced cost picture. The updated \ZKACE{} backend improves proof-generation and off-chain verification times, while proof sizes and direct Groth16 verifier gas remain essentially unchanged: a claim proof can still be verified directly for ${\approx}241\text{k}$ gas. The surrounding per-intent EVM operations measured in the attested-binding deployment remain in the same order of magnitude, with lower native-deposit gas and modestly higher claim, refund, and ERC-20-deposit gas; by average gas, bind, deposit, claim routing, and refund cost on the order of $5$--$18\times 10^4$ gas, with the ERC-20 deposit median at ${\approx}1.83\times 10^5$ gas. A deployment that performs Groth16 verification directly inside the claim path should therefore budget for both the verifier gas and the relevant routing/storage operation gas. The leaner $217$-row STARK claim circuit (versus the generic \ZKACE{} circuit) follows from \HFIPay{} proving only the blinded-binding opening and claim-message authorization, not a full general-purpose authorization statement. End-to-end relay latency and verified-quote proof sizes remain to be benchmarked.

\section{Related Work}\label{sec:related}

\paragraph{Stealth Addresses (CryptoNote, ERC-5564).}
{\sloppy
Stealth addresses originate in CryptoNote~\cite{cryptonote} and underlie Monero's one-time output addresses; ERC-5564~\cite{eip5564} ports the construction to EVM chains. These protocols use ECDH between a sender's ephemeral key and the recipient's stealth meta-address to derive one-time receiving addresses.
This achieves strong unlinkability at the protocol layer, but imposes two costs that \HFIPay{} avoids: (i)~the recipient must continuously scan all on-chain transactions to detect payments, and (ii)~the sender must already know the recipient's stealth meta-address, which itself requires a discovery mechanism.
\HFIPay{} shifts discovery to an application-layer relay and uses blinded claim bindings rather than sender-side ECDH, trading protocol-level decentralization for simpler identifier routing and claim-time proofs.\par}

\paragraph{ENS and Email-Based Naming.}
The Ethereum Name Service~\cite{ens} maps human-readable names to blockchain addresses, solving the UX problem of opaque hex strings.
However, the mapping is fully public: anyone who knows an ENS name can inspect the owner's balances and transaction history.
\HFIPay{} addresses the same usability problem while keeping the identifier-to-intent and identifier-to-address mappings private.
The two systems are therefore complementary: ENS provides public, persistent naming, whereas \HFIPay{} provides private, relay-mediated routing.

\paragraph{Custodial ``Send to Email'' Systems.}
Centralized exchanges and payment apps commonly offer send-to-email or send-to-phone workflows by keeping custody of funds and resolving identifiers inside a private operator database.
Those systems are operationally close to \HFIPay{} in that they rely on a relay-like directory, but they differ in trust placement: the operator ultimately decides both routing and release.
\HFIPay{} instead keeps the identifier mapping private while moving post-funding authorization to chain-verifiable proofs, narrowing the relay's authority in the verified-quote mode.

\paragraph{\ZKACE.}
\ZKACE{}~\cite{zkace} replaces transaction-carried signature objects with identity-bound zero-knowledge authorization statements backed by a deterministic identity commitment.
\HFIPay{} uses \ZKACE{} not for sender-side routing but for recipient-side claim authorization: once the relay has privately resolved an identifier, furnished a sender-verifiable quote, and committed the blinded binding $\rho_i$ together with the quoted asset tuple, the recipient can prove on-chain that the same committed identity authorizes release of that specific intent to a destination address.
This composition removes claim correctness from the relay's discretion after intent registration.

\paragraph{Cross-device recovery.}
Because the recipient identity is a deterministic function of a recoverable root, a recipient can rederive the same identity---and therefore both the public identity commitment and the private claim-binding handle used by the payment layer---across devices from sealed root material gated by a user passphrase, without exposing a public identifier registry.
Because the funded intent also stores the coarse epoch label $e_i$, recovery does not require a live relay merely to learn which binding epoch must be re-derived for claim.
Such recovery does not, however, solve sender-side public lookup of a recipient identity; that remains the job of the relay's private directory.

\paragraph{Account Abstraction (ERC-4337).}
Account abstraction~\cite{erc4337} enables smart contract wallets with flexible authentication, gas sponsorship, and batched operations.
\HFIPay{} is complementary: it addresses the \emph{discovery and routing} problem, while account abstraction addresses execution and sponsorship once the claimant or destination is known.
An EVM implementation can therefore expose \HFIPay{} claims through an ERC-4337-compatible smart-account interface if desired.

\paragraph{Shielded and mixing-based payments.}
A large body of work hides payment data on-chain. Zerocash~\cite{zerocash} builds a decentralized anonymous payment system in which a shielded pool, commitments, and zero-knowledge spend proofs conceal sender, recipient, and amount; Zether~\cite{zether} adds confidential (amount-hiding) and anonymous transfers to account-based smart-contract chains; and mixing protocols such as Tornado Cash~\cite{tornado} break sender-recipient links by pooling deposits into a common anonymity set with zero-knowledge withdrawals.
\HFIPay{} targets a deliberately different and narrower privacy goal. It does \emph{not} hide amounts or asset types---these are public on-chain (Table~\ref{tab:visibility})---and it does \emph{not} commingle funds into a shared pool. Its only privacy objective is to keep the mapping from a human-friendly \emph{identifier} to on-chain intents private before claim, while preserving an off-chain identifier anchor for routing and lawful accountability. Consequently \HFIPay{} avoids the anonymity-set and shielded-pool machinery of those systems, at the cost of leaking the payment amount and a coarse epoch label.
This identifier-routing-only stance is closer to the privacy-versus-compliance equilibrium articulated by Privacy Pools~\cite{privacypools} than to full shielding: like that line of work, \HFIPay{} keeps a deliberate accountability path (Section~\ref{sec:trust}) rather than maximizing anonymity.

\section{Conclusion}\label{sec:conclusion}

\HFIPay{} is best understood as a composition of four layers: private identifier routing, sender-verifiable quote binding, blinded intent commitment, and on-chain zero-knowledge claim authorization.
The relay samples a fresh random \texttt{intentId}, derives a one-time deposit address, and commits only an intent-specific blinded binding $\rho_i$ plus the quoted asset tuple on-chain.
In the verified-quote deployment, the sender first checks an off-chain attested quote that proves the quoted $\rho_i$ was derived from the intended recipient's hidden binding handle, locally recomputes the quoted deposit address, and confirms the registered on-chain tuple before funding, all without exposing a public identifier registry.
The recipient later proves, via \ZKACE{}, that the same deterministic identity underlying a committed on-chain identity can satisfy that blinded binding and authorize release of the quoted asset and amount to a chosen destination.

This architecture makes the privacy boundary explicit.
Before claim, the chain reveals random intent identifiers, deposit addresses, and one-time blinded bindings, but not human-friendly identifiers or reusable recipient tags.
After claim, linkability depends on what public identity commitment and destination information is reused.
Deterministic re-derivation of the recipient root completes the picture by giving the recipient a private cross-device recovery path to the same identity, without requiring a public identifier registry.

The resulting trust model is narrower and clearer than that of a custodian.
In the baseline deployment, the application is still trusted for correct one-time enrollment and for correct pre-funding recipient binding.
In the verified-quote deployment, attested off-chain quotes remove that unilateral pre-funding substitution power from the relay while preserving pre-claim on-chain privacy.
In either case, once the intent's blinded binding has been committed on-chain, the relay can no longer unilaterally redirect the claim.

\subsection{Future Work}

Several directions remain open:

\begin{itemize}
    \item \textbf{Quote-proof optimization.} Compressing or aggregating sender-verifiable quote proofs would reduce latency and make the verified-quote deployment cheaper to operate at scale.
    \item \textbf{Private directory federation.} Distributing or federating the private directory across independent operators could reduce single-point compromise risk while preserving private lookup.
    \item \textbf{Proof aggregation.} Batched or recursive verification of \ZKACE{} claim proofs would reduce verifier cost for high-throughput payment hubs.
    \item \textbf{Lazy first-receipt onboarding.} Formalizing a strong first-receipt enrollment flow for previously unregistered recipients remains an important deployment problem.
\end{itemize}

\bibliographystyle{plain}

\begin{thebibliography}{99}

\bibitem{acegf}
J.~S.~Wang.
\newblock {ACE-GF}: A Generative Framework for Atomic Cryptographic Entities.
\newblock arXiv preprint arXiv:2511.20505, 2025.

\bibitem{zkace}
J.~S.~Wang.
\newblock {ZK-ACE}: Signature-Scheme-Agnostic Authorization from Identity Commitments.
\newblock arXiv preprint arXiv:2603.07974, 2026.

\bibitem{poseidon}
L.~Grassi et~al.
\newblock Poseidon: A new hash function for zero-knowledge proof systems.
\newblock In \emph{USENIX Security}, 2021.

\bibitem{poseidon2}
L.~Grassi et~al.
\newblock Poseidon2: A Faster Meta-Hash Function for SNARKs and STARKs.
\newblock \emph{IACR ePrint} 2023/323, 2023.

\bibitem{stwo}
StarkWare Industries.
\newblock Stwo: A Circle STARK prover.
\newblock \url{https://github.com/starkware-libs/stwo}, 2024--2026.

\bibitem{eip5564}
T.~Wahrst\"atter, M.~Solomon, B.~DiFrancesco, and V.~Buterin.
\newblock {ERC-5564}: Stealth Addresses.
\newblock Ethereum Improvement Proposals, 2022.

\bibitem{ens}
N.~Johnson.
\newblock {ERC-137}: Ethereum Domain Name Service --- Specification.
\newblock Ethereum Improvement Proposals, 2016.

\bibitem{erc4337}
V.~Buterin, Y.~Weiss, D.~Tirosh, S.~Nacson, A.~Forshtat, K.~Gazso, and T.~Hess.
\newblock {ERC-4337}: Account Abstraction Using Alt Mempool.
\newblock Ethereum Improvement Proposals, 2021.

\bibitem{tornado}
A.~Pertsev, R.~Semenov, and R.~Storm.
\newblock Tornado Cash Privacy Solution.
\newblock White paper, 2019.

\bibitem{zerocash}
E.~Ben-Sasson, A.~Chiesa, C.~Garman, M.~Green, I.~Miers, E.~Tromer, and M.~Virza.
\newblock Zerocash: Decentralized anonymous payments from Bitcoin.
\newblock In \emph{IEEE Symposium on Security and Privacy (S\&P)}, pp.~459--474, 2014.

\bibitem{cryptonote}
N.~van Saberhagen.
\newblock CryptoNote v2.0.
\newblock White paper, 2013.

\bibitem{privacypools}
V.~Buterin, J.~Illum, M.~Nadler, F.~Sch\"ar, and A.~Soleimani.
\newblock Blockchain privacy and regulatory compliance: Towards a practical equilibrium.
\newblock \emph{Blockchain: Research and Applications}, 5(1), 2024.

\bibitem{zether}
B.~B\"unz, S.~Agrawal, M.~Zamani, and D.~Boneh.
\newblock Zether: Towards privacy in a smart contract world.
\newblock In \emph{Financial Cryptography and Data Security (FC)}, pp.~423--443, 2020.

\bibitem{bellare-goldwasser}
M.~Bellare and S.~Goldwasser.
\newblock New paradigms for digital signatures and message authentication based on non-interactive zero knowledge proofs.
\newblock In \emph{CRYPTO 1989}, LNCS 435, pp.~194--211. Springer, 1990.

\bibitem{gmr-sig}
S.~Goldwasser, S.~Micali, and R.~L.~Rivest.
\newblock A digital signature scheme secure against adaptive chosen-message attacks.
\newblock \emph{SIAM J. Comput.}, 17(2):281--308, 1988.

\end{thebibliography}

\end{document}